 \newtheorem{theorem}{Theorem}[section]
 \newtheorem{corollary}[theorem]{Corollary}
 \newtheorem{lemma}[theorem]{Lemma}
 \newtheorem{definition}[theorem]{Definition}
\newcommand{\BlackBox}{\rule{1.5ex}{1.5ex}}  % end of proof
\newenvironment{proof}{\par\noindent{\bf Proof\
}}{\hfill\BlackBox\\[2mm]} 
\definecolor{darkblue}{rgb}{0,0,0.7}
\title{Scalable and Anonymous Modeling of Large
Populations of Flexible Appliances}
\author{Mahnoosh Alizadeh, Anna Scaglione, Andy Applebaum, George Kesidis, and Karl Levitt \thanks{Supported by US DOE under CERTS load as a resource program, NSF SaTC CCF1229008, NSF 1228717. Email: malizadeh@ucdavis.edu. This work has been submitted in part to CDC 2014 as an invited presentation.}
}
\begin{document}
\maketitle

%%% ----------------------------------------------------------------------
\begin{abstract}
To respond to volatility and congestion in the power grid, demand response (DR) mechanisms allow for shaping the load compared to a base load profile. When tapping on a large population of heterogeneous appliances as a DR resource, 
the challenge is in modeling the dimensions available for control. Such models need to strike the right balance between accuracy of the model and tractability.  
The goal of this paper is to provide a medium-grained stochastic hybrid model to represent a population of appliances that belong to two classes: deferrable or thermostatically controlled loads. We preserve quantized information regarding  individual load constraints, while discarding information about the identity of appliance owners. The advantages of our proposed population model are 1)  it allows us to model and control load in a scalable fashion, useful for ex-ante planning  by an aggregator or for real-time load control; 2) it allows for the preservation of the privacy of end-use customers that own submetered or directly controlled appliances.
\end{abstract}

\section{Introduction}\label{sec.intro}
Significant efficiency loss is experienced in the electricity market since demand is considered mostly inelastic.  
Demand response (DR) programs allow participants  to modify their load in response to economic incentives.  Dynamic pricing experiments seem to be promising \cite{hammerstrom2007pacific}, but the architecture for DR is not yet settled.

 Irrespective of implementations,   heterogeneous DR functionalities provided through the commercial and residential sectors are commonly subsumed in entities referred to as {\it aggregators}. 
%FERC order 745 rules the way the aggregator and market coordinator should interact. 
%First, a mutually agreed upon {\it measurement and verification}  (M\&V) standard must be in place. (M\&V) standards postulate the existence of a {\it base load} and define a mechanism to measure it and verify its value, against a change. Second, the amount of power reduction produced compared to the base-load in a certain period can be paid at the Local Marginal Price (LMP) of electricity for that period.  
%Notably, the M\&V standard is not deliberated in FERC order 745.  
% In a recent study on M\&V \cite{} it is clearly expressed that the base-load is not really an observable quantity. The standard is a mutually accepted convention between the aggregator and the other market players.  
%  M\&V standard tests will have to be met depending on the specific market settlement.
Any aggregator with flexible demand will need to have day-ahead, hour-ahead and shorter term forecasts and risk models for the flexible loads it serves ({\bf ex-ante} models). It will also need a real-time {\it hybrid control} model to execute the optimal planned control in {\bf real-time}.  This need for a scalable model of load flexibility is an inherent property of any type of end-use demand management, e.g., direct load control/scheduling and dynamic retail pricing. However, when controlling a large population of heterogeneous appliances, developing such models can be a burden. Consequently, over-simplification of aggregate demand characteristics and constraints is common.
 
 The quest of this paper is to mathematically capture the ability of load profiles  resulting  from disparate, random uses to change into different shapes. We refer to the set of possible load shapes of an appliance or a population of appliances as {\it load plasticity}. Our idea is to build medium-grained models for the load plasticity of a population  based on quantizing the intrinsic controllability of a few archetypes. These accurate yet scalable load models can be beneficial in multiple stages.
As we will see in this paper, ex-ante, an aggregator can use stochastic load plasticities   to plan forward energy market purchases, make DR decisions, and quantify any associated risks. On the other hand, during real-time operation, they can be used to derive a compatible real-time control framework that is accurate and privacy-preserving.

%Because the uses are differentiated, compared to  M\&V standards, without revealing the identity of the customer, these models do not have to obfuscate features of the load itself and can help: 1) rank the quality of service provided as the load injection is modified to reduce service costs considering the load use as opposed to who is consuming, and 2) estimate the marginal economic benefit of each control action.  
\subsection{Contribution and prior art}
There has been extensive work on aggregate load modeling at the transmission level to forecast inelastic demand. However, as loads starts to become responsive at  large scales, the need for aggregate models of flexible demand becomes more significant. Without sufficiently accurate models, widespread DR can become a reliability hazard instead of a resource.

 Thus, the literature on DR includes a wide range of modeling options, corresponding to different resolutions in describing flexible load populations.  One approach is to preserve all details about the appliance state and constraints. The electric power consumption properties are either idealized (often as a battery) or described realistically via hybrid dynamical systems equations. Examples of adopters of such models for market analysis are  \cite{galus,6084772,6419868,6507354,6145671}. Detailed models are naturally advocated to shape the load in response to a price, using Home Energy Management Systems protocols \cite{mohsenian2010autonomous,kefayati2010efficient}, or to follow a desired load profile via centralized controllers or task scheduling rules that are optimal in some conditions; many papers considered Electric Vehicles (EV) \cite{kesidis,chen2012large,subramanian2012real,ramraja,6471273} and explored the optimality of simple policies such as  Least Laxity First (LLF) and Earliest Deadline First (EDF).

Another series of papers propose to capture the total flexibility of deferrable loads for planning and market interactions as a tank that needs to be fully charged by a certain deadline \cite{homer, 6426102,ortega}. All the specific characteristics of individual appliances except for their total energy consumption are discarded. Thus, the model has minimal computational cost  for ex-ante planning, but it is unsuitable for scheduling due to its coarseness and lack of information about individual appliances. 

These two groups of previous work sit at the two opposite sides of the spectrum in terms of accuracy and scalability. The goal of our work is to provide a medium-grained model that can be shifted across the spectrum as required for the application. The literature on flexible load modeling that most relates to our  paper dates back to  work on predictive modeling of the rebound peak due to emergency interruption of  a homogeneous population of Thermostatically Controlled Loads (TCL) in \cite{Chong}.  The challenge in population modeling is that different TCLs are at different states (temperatures) at a certain point of time and have different comfort requirements. The important idea to capture their load in a single mathematical model is to  classify these TCLs into
groups with similar states, resulting in a discrete formulation.
The concept of grouping similar TCLs  in \cite{Chong} was  later further refined by \cite{lu-chassin04,lu-chassin-windegren05, koch2011modeling}.  The temperature range of devices is quantized in bins, and the occupancy of each of these bins is tracked using a linear state space model. This approach leads to promising results in providing ancillary services. Recently \cite{hao2013generalized, globalSIP} used a similar model for heterogeneous TCLs with a control approach inspired by processor scheduling, and proposed LLF as a possible simple heuristic load control policy.   The authors in \cite{6407491} and our previous work \cite{smartgridcomm} discretized state of charge, rate and deadline constraints to map the problem of charge scheduling of heterogeneous electric vehicles  into a task scheduling problem.  Here we take this work a step further by providing a unified framework to describe a large population of {\it heterogeneous} loads that can be deferred, interrupted, and whose consumption rate can be controlled, as well as TCLs.

A population load model is the basis for modeling the response of an appliance population to DR signals and alter the load in the desired fashion. We will see that to build a population load model,  the aggregator must gather information about the energy usage time and pattern of different customers.  Obtaining this information through the Smart Meters, however, has raised significant concerns about customer privacy. Finding good compromises (see e.g. \cite{lalitha})  may be hard, as the need to ensure grid safety usually overwhelms arguments
  for confidentiality.  
 One of the byproducts of the scalable load model that we propose is that the data required to build the model is appliance-blind. We require no personal identification for monitoring and direct scheduling of appliances. Thereby, we can naturally enable customer anonymity. We will discuss this aspect of our model in detail for the interested reader. Previous work on privacy for the Smart Grid is focused nearly exclusively on the anonymization of meter readings, with a wide range of strategies intended to prevent household analysis having been proposed. This includes obscuring the readings themselves (e.g., \cite{5622047}, \cite{Acs2011IDD2042445.2042457}), or using multi-party aggregation techniques to hide the individual meters (e.g., \cite{Garcia2010PEV2050149.2050164,5622064,Kursawe2011PAS2032162.2032172}).  
%Our approach falls outside of this traditional push as we are concerned not with household metering data, but rather with smart, energy-conscious appliances and the privacy challenges that they introduce:  while not necessarily a central focus of the smart grid literature, security (e.g., \cite{6497830}) and anonymity (e.g., \cite{Stegelmann:2011:DEP:2341569.2341577}) for EVs in particular have both received some attention.  
Most closely related to our work is \cite{rottondi2013privacy}, in which the authors propose a privacy-friendly appliance scheduling protocol.

%In order to offer ancillary services the aggregator will have to manage information from a large population of loads, a question that was partially addressed by \cite{alizadeh2013clustering,hicss} and that we revisit here, considering several essential aspects of this problem: the model to capture anonymous service requests from interruptible loads, and the evaluation of the capacity available to the aggregator. 
%\begin{figure}
%\centering
%\includegraphics[width = 0.9\linewidth]{operation.png}
%\caption{The temporal stages of aggregator operation}
%\label{fig:sim1}
%\end{figure}

%\vspace{-.3cm}
\textbf{Synopsis}: We introduce the basic elements of making scalable population models using the simplest type of appliances, i.e., ideal batteries, in Secion \ref{sec.loadmodel}. In Section \ref{pop.gen}, we generalize our models to realistic appliance categories. In Section \ref{sec.why}, we have a high-level discussion of the potentials of population load plasticities to help with DR planning and control. In Section \ref{sec.anoncom}, we discuss the communication requirements of submetering and direct scheduling frameworks that are based on hybrid population load models. Finally, in Section \ref{sec.numerical}, we demonstrate the usefulness of accurate yet scalable population models for an aggregator controlling EVs and TCLs.

\section{Modeling Aggregate Demand}\label{sec.modeling}\label{sec.loadmodel}

Most appliances are hybrid systems whose evolution is described by both continuous and discrete state variables \cite{alur1993hybrid}. 
One such continuous variable is the electric power that the appliance consumes, which we denote by ${\rm L}_i({\rm t})\in \mathbb{R}^{+}$ for an appliance indexed by $i$, and one such discrete variables is the power-switch state.  

To address the aggregate demand modeling problem systematically we introduce the concept of {\it load plasticity}. We show how the plasticity can be represented as an integer linear model in some important appliance categories.  
Then, building on these models for the individual appliances, we propose the idea of {\it clustering} appliances by quantizing the load plasticity into discrete classes. This leads to computationally scalable aggregate load plasticity models.

{\bf Notation}: We use roman font  for continuous variables $\rm x(t)$ and italic for discrete variables $x(t)$. Boldface
is used for vectors $\boldsymbol{x}$. We denote finite differences with respect to time as $\partial x(t)=x(t+1)-x(t)$. The unit step is denoted by $u(t)$, and   the Kronecker delta function is $\delta(t)$ (equal to 1 if the argument is equal to zero and equal to 0 otherwise). 
The symbol $\star$ denotes the discrete time convolution.  We use the notation ${\mathbb E}\{x\}$ to indicate
the expected value of a random variable $x$.

{\bf Vocabulary}: We say that an appliance {\it arrives} in the grid when it is first available to consume electricity, e.g., when an EV is parked and plugged in, or when a TCL  has to condition a space. The appliance may be immediately provided with electricity, as is the current service paradigm of the power grid. Alternatively, the service may be shifted to a later time under some type of demand management strategy, in which case we refer to the load as {\it deferrable}. The maximum amount of delay that an appliance can tolerate is captured by a deadline to finish its job, e.g., charge a battery in full. Alternatively, this maximum delay can be captured by a slack time. We define slack time as the amount of
time left between the end of an appliance's task and its deadline if the request is served immediately after arrival. For example, if a dryer cycle starts immediately after it is requested, the amount of time between the end of the cycle and its deadline represents the slack time.

\vspace{-0.3cm}
\subsection{Load Plasticity for the Ideal Battery}
We refer to the potential of ${\rm L}_i({\rm t})$ for being modified by control actions as {\it load plasticity}. Mathematically, if an appliance indexed by $i$ becomes available for load control at time ${\rm t}_i$, the load plasticity is a set ${\mathcal L}_i({\rm t})$ of load profiles that can be chosen by control actions taken at times ${\rm t>t_i}$. Any control is possible as long as the service quality remains within the preferences specified by the end-user.

%Let us introduce the following definition: 
%\begin{definition}  The load plasticity is {\it lossless} if it is energy neutral, i.e. all the load injections in  ${\mathcal L}_i$ have the same energy. This captures any appliance without a rate of energy loss in time. Thus, thermostatically controlled loads or batteries with a self-discharge rate do not fall in this category.
%\end{definition}

\subsubsection{The Ideal Battery} The simplest type of load {\it plasticity} is that offered by an ideal battery that remains on indefinitely. An ideal battery indexed by $i$ first starts  to charge at a certain time $t_i$, it has an initial charge of ${\rm S}_i$ energy units, and a total charge capacity of ${\rm E}_i$ energy units. Denoting the  state of charge at time $t$ by $\rm x_i(t)$, the load plasticity is the set of load profiles:
  \begin{align}\label{lossless-battery}
  \rm
{\mathcal L}_i(\rm t)=\{\rm L_i(\rm t) |&  {\rm L}_i(\rm t) = \dot {\rm x}_i(\rm t), {\rm x}_i(t_i) = {\rm S}_i,\nonumber\\ & 0 \leq {\rm x}_i(t)\leq {\rm E}_i, t\geq t_i \}.
\end{align}
%The degree of completion of the job, or {\it state of charge}, is:
%\begin{equation}
%SoC_i(t)\triangleq \frac{\int_{t_i}^{t}L_i(v)dv}{{\rm E}_i-{\rm x}(t_i)}=\frac{{\rm x}(t)-{\rm x}(t_i)}{{\rm E}_i-{\rm x}(t_i)}.
%\end{equation}

This model is analog and continuous and, thus, computationally infeasible. A natural step commonly taken in many control and communication applications that involve continuous values is to quantize signals, which is the approach we take in this paper. The reader will see that  quantizing continuous values and signals is the basic principle which allows us to provide a medium-grained scalable model for the load plasticity of a population of non-homogeneous appliances.

We start by quantizing time. If discrete time indices are separated by $\delta T$, we can use the index $t\in \mathbb{Z}$ such that ${\rm t}=t\delta T$.  
The second step consists of quantizing ${\rm E}_i$ and ${\rm S}_i$. We write the discrete counterparts of these variables as $(E_i, S_i) = (Q({\rm E}_i), Q({\rm S}_i))$, where $Q(.)$ represents a uniform quantizer with step of $\delta x$ energy units. For brevity of notation, we assume that both quantization steps are normalized to $\delta T=1$ and $\delta x=1$\footnote{Changing $\delta x$ and $\delta T$ to a generalized value is trivial and amounts to a scaling in all equations that map energy, e.g., $x(t)$, into power, e.g., $L(t)$. We avoid carrying on this scale as it unnecessarily burdens the notation. }.   Thus, the discrete version of the load plasticity for an ideal battery becomes:
  \begin{eqnarray}\nonumber
{\mathcal L}_i(t)&=&\{L_i(t) |  L_i(t) = \partial x_i(t), x_i(t_i) = S_i, \\
				& &x_i(t) \in \{0,1,\ldots, E_i\}, t\geq t_i \}\label{lossless-battery-discrete}.
\end{eqnarray}

Note that $ {\mathcal L}_i(t)$ can be fully specified by the following:
$$(t_i, S_i, E_i)$$
Hence,  {\bf ex-ante}, i.e., before the arrival of an appliance, the aggregator can model each load's plasticity, ${\mathcal L}_i(t_i)$, by characterizing the randomness of these three  parameters. The randomness of $t_i$ is captured by a random arrival process: 
\begin{equation}a_i(t)=u(t-t_i),\label{arrivaldef}\end{equation}
with $a_i(t)$ equal to 1 if appliance $i$ is present in the system. The randomness of $S_i$ and $E_i$ is captured by a probability law $p(S,E; t_i)$ that reflects historical statistical information about requests.
  For a {\bf real-time} controller, the state and thus the future load plasticity of an appliance  changes  as control actions are taken, and the residual set  ${\mathcal L}_i(t)$ at  time $ t>t_i$ is given by replacing  $S_i$ with  $x_i(t)$ and $t_i$ with $t$.

\subsubsection{Population model for the ideal battery}\label{idealbat.pop}
The difference between modeling a single ideal battery and a population of ideal batteries is in capturing the effects of non-homogeneous $(t_i, S_i, E_i)$ in the description of the aggregate population load plasticity in a scalable fashion. To achieve this goal, we will try to group appliances that share similar load plasticities together, allowing us to discard information about individual appliances.

To start, suppose that we are modeling a population of batteries indexed by $i \in {\cal P}_E$ that all have the same capacity $E$, but have non-homogeneous $S_i$ and $t_i$. 
Using \eqref{lossless-battery-discrete} and \eqref{arrivaldef}, we can write the total load of the batteries as:
\begin{equation}\label{pop}
L(t)=\sum_{i\in {\cal P}_E}L_i(t)=\sum_{i\in {\cal P}_E}\partial x_i(t) a_i(t).
\end{equation}

To avoid tracking individual appliances and lower the complexity of \eqref{pop}, the basic idea first used  in \cite{chong85} is
to track the number of appliances $n_x(t)$ that are present in the system at time $t$ and are at quantized state $x$, discarding information that identifies individual appliances. This {\it aggregation} is what enables our anonymous modeling and control techniques. Mathematically, we can write $n_x(t)$ as
\begin{equation}\label{numberineachstate}
 n_x(t)=\sum_{i\in {\cal P}_E}\delta(x_i(t)-x)a_i(t),~x=1,\ldots,E
\end{equation} 
%To keep track of the random arrival of batteries in the population as an aggregate, we need to find an alternative to keeping track of individual $a_i(t)$'s. 
%\begin{definition} 
Similarly, we denote the total number of batteries that arrive in the system with an initial state of charge equal to $x$ at or before time $t$ as $a_x(t)$. The value of $a_x(t)$ should be tracked in real-time as an input to the population model and can be written in terms of individual arrival processes $a_i(t)$ as:
\begin{equation}\label{apx}
 a_x(t)=\sum_{i\in {\cal P}_E}\delta(S_i-x)  a_i(t).
\end{equation}
We refer to $a_x(t)$ as the {\it arrival process for state $x$}.
%\end{definition}

Next, we directly tie the evolution of $n_x(t)$ and $a_x(t)$ to the total load $L(t)$, removing all dependence on $x_i(t)$ and $a_i(t)$.

\begin{lemma}\label{nxlemma}{\it 
The following relationship holds between $n_x(t)$ and the load $L(t)$:
\begin{align}\label{occur}
L(t) & = \sum_{x=0}^{E}\left[\left(\sum_{x'=x}^{E} \partial n_{x'}(t) \right) - (x+1)\partial a_{x}(t)\right].
\end{align}
}
\end{lemma}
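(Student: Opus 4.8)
The plan is to route the whole identity through two elementary bookkeeping quantities: the total stored energy $X(t)=\sum_{x=0}^{E}x\,n_x(t)$ and the total headcount $N(t)=\sum_{x=0}^{E}n_x(t)$ (extending the definition \eqref{numberineachstate} to $x=0$). The energy $X(t)$ will carry the load, while $N(t)$ supplies a conservation law that collapses the triangular sum on the right-hand side. First I would rewrite that right-hand side in single-index form. Interchanging the order of summation,
\begin{equation}\nonumber
\sum_{x=0}^{E}\sum_{x'=x}^{E}\partial n_{x'}(t)=\sum_{x'=0}^{E}(x'+1)\,\partial n_{x'}(t),
\end{equation}
since each index $x'$ is counted once for every $x\in\{0,\dots,x'\}$; the same triangular identity applies to the arrival term, so the claimed right-hand side equals $\sum_{x=0}^{E}(x+1)\big[\partial n_x(t)-\partial a_x(t)\big]$.

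Next I would eliminate the spurious $+1$ via conservation of appliances. Because each present appliance occupies exactly one state in $\{0,\dots,E\}$ and arrived with exactly one initial state in the same range, summing \eqref{numberineachstate} and \eqref{apx} over $x$ gives $\sum_x n_x(t)=\sum_x a_x(t)=\sum_{i\in\mathcal{P}_E}a_i(t)=N(t)$ (no battery departs, so present equals arrived). Hence $\sum_x\big[\partial n_x(t)-\partial a_x(t)\big]=\partial N(t)-\partial N(t)=0$, and the right-hand side reduces to $\sum_{x}x\big[\partial n_x(t)-\partial a_x(t)\big]=\partial X(t)-\sum_{x}x\,\partial a_x(t)$.

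It remains to prove the energy balance $\partial X(t)-\sum_x x\,\partial a_x(t)=L(t)$, which is the real crux. Writing $X(t)=\sum_i x_i(t)a_i(t)$ and expanding $a_i(t+1)=a_i(t)+\partial a_i(t)$, I would split
\begin{equation}\nonumber
\partial X(t)=\sum_i\big[x_i(t{+}1)-x_i(t)\big]a_i(t)+\sum_i x_i(t{+}1)\,\partial a_i(t)=L(t)+\sum_i x_i(t{+}1)\,\partial a_i(t).
\end{equation}
The delicate point, and where I expect the main difficulty, is the arrival term: $\partial a_i(t)$ is nonzero only on the step when appliance $i$ enters, and by the initial condition $x_i(t_i)=S_i$ its state there is exactly $S_i$, so $x_i(t{+}1)\,\partial a_i(t)=S_i\,\partial a_i(t)$. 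This is precisely the bookkeeping that separates genuine charging power from the energy a freshly arriving battery simply brings with it, and it hinges on the load being defined with $a_i(t)$ rather than $a_i(t+1)$ so that an appliance contributes no load on its arrival step. Regrouping through \eqref{apx} gives $\sum_i S_i\,\partial a_i(t)=\sum_x x\,\partial a_x(t)$, whence $\partial X(t)=L(t)+\sum_x x\,\partial a_x(t)$; this cancels the subtracted term and establishes the lemma.
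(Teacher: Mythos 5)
Your proof is correct --- each step checks out against the paper's definitions of $n_x(t)$, $a_x(t)$ and $L(t)$ in \eqref{numberineachstate}, \eqref{apx}, \eqref{pop} and the convention $\partial x(t)=x(t+1)-x(t)$ --- but it takes a genuinely different route than the paper's own proof. The paper works per appliance throughout: it substitutes the definition of $n_{x'}(t)$, collapses the inner sum $\sum_{x'=x}^{E}\delta(x_i(\cdot)-x')$ into the step function $u(x_i(\cdot)-x)$, telescopes $\sum_{x=0}^{E}[u(x_i(t+1)-x)-u(x_i(t)-x)]=\partial x_i(t)$, and reads off the arrival correction $(x_i(t+1)+1)\partial a_i(t)=(S_i+1)\partial a_i(t)$, which matches the $(x+1)$ weight in \eqref{occur} directly. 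You instead stay at the aggregate level as long as possible: swap the triangular sum to get weights $(x'+1)$, split into zeroth and first moments, cancel the zeroth moment via the conservation law $\sum_x n_x(t)=\sum_x a_x(t)$ (valid since ideal batteries never depart), and reduce what remains to the energy balance $\partial X(t)=L(t)+\sum_x x\,\partial a_x(t)$. The crux is identical in both proofs --- the split $a_i(t+1)=a_i(t)+\partial a_i(t)$ combined with the initial condition $x_i(t_i)=S_i$; the paper invokes precisely this at the end of its argument (``$\partial a_i(t)$ is only non zero when $t_i=t+1$, i.e., $x_i(t+1)=S_i$''), and you correctly flag it as the delicate point. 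What your organization buys is physical transparency: the identity reads as ``headcount is conserved'' plus ``load equals the change in stored energy minus the energy that arrivals carry in,'' and the weight $(x+1)$ is explained as energy ($x$) plus count ($1$). The paper's version is a single self-contained chain from the definitions, with no auxiliary quantities $N(t)$ and $X(t)$, at the price of more opaque step-function bookkeeping. One cosmetic remark: your clause ``the same triangular identity applies to the arrival term'' is vacuous --- the arrival term in \eqref{occur} is already in single-index form with weight $(x+1)$ --- but nothing downstream depends on it.
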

 See Appendix \ref{prooflemma} for proof.

The movement of appliances from one state to another is what determines $\partial n_x(t)$. We capture this next.

\begin{definition}{\it  We denote by $d_{x,x'}(t)$ the number of batteries that go from state $x$ to state $x'$ at time $t$. We we refer to this number as the {\it switch process from state $x$ to $x'$}. We define $d_{x,x}(t) = 0, ~\forall t,x$, and $d_{x,x'}(0) = 0,~\forall x,x'$.}
\end{definition}
\begin{corollary}{\it
The occupancy $n_x(t)$ and aggregate load $L(t)$ in terms of $d_{x,x'}(t)$ are: }
\begin{align}\label{nxt2}
n_{x}(t+1)&=  a_x(t+1) + \sum_{x' = 0}^E [d_{x',x}(t)- d_{x,x'}(t)]
\\\label{occur3}
L(t) &= \sum_{x=0}^E \sum_{x'=0}^E (x'-x)\partial d_{x,x'}(t)
\end{align}
\end{corollary}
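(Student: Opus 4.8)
The plan is to prove the two identities in turn, treating \eqref{nxt2} as a conservation law established directly from the definitions and \eqref{occur3} as a genuine corollary obtained by feeding \eqref{nxt2} into Lemma~\ref{nxlemma}. First I would establish \eqref{nxt2} by a flow-balance argument on the occupancy $n_x(t)$ of \eqref{numberineachstate}. The key observation is that, for ideal batteries that remain on indefinitely, appliances never leave the system, so over one time step the count in state $x$ can change only through an arrival whose initial charge equals $x$ (tracked by $a_x$ in \eqref{apx}), a switch into $x$ from some $x'$, or a switch out of $x$ to some $x'$. This yields the incremental balance $\partial n_x(t) = \partial a_x(t) + \sum_{x'=0}^{E}[\partial d_{x',x}(t) - \partial d_{x,x'}(t)]$, where the switch terms are exactly the number of batteries making the corresponding transition during the step, by the definition of the switch process $d_{x,x'}$. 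Summing (telescoping) this relation from the initial time and invoking $d_{x,x'}(0)=0$ together with the emptiness of the system at startup collapses the cumulative switch sums and arrival terms into \eqref{nxt2}.

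With \eqref{nxt2} in hand, the load identity \eqref{occur3} follows by substitution into the expression \eqref{occur} of Lemma~\ref{nxlemma}. Concretely, I would replace $\partial n_{x'}(t)$ in the inner sum of \eqref{occur} by the incremental balance above and then simplify the nested sums; two cancellations drive the computation. The arrival contribution $\sum_{x=0}^{E}\sum_{x'=x}^{E}\partial a_{x'}(t)$, after exchanging the order of summation so that each $x'$ is weighted by $|\{x : 0 \le x \le x'\}| = x'+1$, becomes $\sum_{x'}(x'+1)\partial a_{x'}(t)$, which cancels the term $\sum_x (x+1)\partial a_x(t)$ already present in \eqref{occur}. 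The remaining switch contribution acquires the same weight $x'+1$ from the triangular sum, and a relabeling of the two dummy indices in the switch-in term converts the combination into the antisymmetric weight $(x'-x)$, giving $L(t)=\sum_{x}\sum_{x'}(x'-x)\partial d_{x,x'}(t)$, which is \eqref{occur3}. As an independent check I would note that \eqref{occur3} can be read off directly from $L(t)=\sum_i \partial x_i(t)a_i(t)$ in \eqref{pop}: grouping the present appliances by their transition $x \to x'$, each contributes $x'-x$ and there are $\partial d_{x,x'}(t)$ of them.

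The main obstacle is bookkeeping rather than depth. The delicate point in the first part is the timing convention: I must fix consistently when an arrival is first counted in $n_x$ and when a switch occurring at time $t$ takes effect, so that the cumulative variables carry the correct time index and the telescoped sum reproduces both the stated indices in \eqref{nxt2} and the correct base case. In the second part, the error-prone step is the double-sum manipulation—exchanging the order of summation to expose the weight $x'+1$, checking that the arrival terms cancel exactly, and performing the index swap that produces the factor $(x'-x)$. These steps are routine once the summation ranges and the meaning of $\partial$ are pinned down, but they are precisely where an off-by-one can creep in, so I would carry them out with explicit attention to the endpoints $x=0$ and $x=E$ and to the initial-condition terms.
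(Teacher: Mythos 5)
Your proposal is correct and follows essentially the same route as the paper: establish the one-step flow balance $\partial n_x(t)=\partial a_x(t)+\sum_{x'}\partial[d_{x',x}(t)-d_{x,x'}(t)]$, telescope it in time to obtain \eqref{nxt2}, and substitute the balance into \eqref{occur} so that the arrival terms cancel and the index relabeling produces the weight $(x'-x)$ in \eqref{occur3}. The only difference is that you spell out the double-sum manipulation that the paper leaves implicit, and your computation of it is accurate.
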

\begin{proof}
The occupancy at time $t+1$ should include the previous occupancy plus new arrivals from other states or from outside, minus the population that exits the state:
\begin{align}\label{nxt}
\partial n_{x}(t)&=\partial a_x(t) + \sum_{x' = 0}^E \partial [d_{x',x}(t)-d_{x,x'}(t)]
\end{align}
which leads to \eqref{nxt2} if summed over time. If we substitute the value of $\partial n_x(t)$ in \eqref{nxt} into \eqref{occur}, we get \eqref{occur3}.\end{proof}
Thus, the load plasticity of a $L(t)$ of a population of ideal batteries can be presented in terms of the $d_{x,x'}(t)$'s, under appropriate constraints:
  \begin{eqnarray}\nonumber
{\mathcal L}_E(t) = \left\{L(t) |  L(t) = \sum_{x=0}^E \sum_{x'=0}^E (x'-x)\partial d_{x,x'}(t), \right. &&\\ \left.\partial d_{x,x'}(t) \in \mathbb{Z}^+, \sum_{x' = 1}^{E}\partial d_{x,x'}(t) \leq n_x(t)\right\},&&\label{occur4}
\end{eqnarray}
where $n_x(t)$ is given by \eqref{nxt2}. The second constraint ensures that the number of appliances that leave  state $x$ at time $t$ is less than or equal to the number of appliances present in state $x$ at time $t$, i.e., $n_x(t)$.

%We denote the set of accepatble values for $n_x(t)$ as,
%\begin{equation}
%(n_1(t), n_2(t), \ldots, n_E(t)) \in \mathcal{N}_{t}.
%\end{equation} 
%Generally, $\mathcal{N}_{t}$ is characterized in terms of the allowable control actions that can applied to the appliances in the population.
% For ideal batteries, the set of control actions is easy to specify, as each battery can move from one state $x$ to any other $x'$ in just one time step by consuming $x'- x$. Thus,  with no charge deadlines,   $\mathcal{N}_{t}$ is
%\begin{eqnarray}
%\mathcal{N}_t &=& \{(n_1(t), n_2(t), \ldots, n_E(t))| \forall  x: n_x(t) \geq \partial a_x(t), \nonumber\\
%&~&\sum_{x=0}^{E} n_x(t) =  \sum_{x=0}^{E} \left(n_x(t-1) + \partial a_x(t) \right) \} 
%\end{eqnarray} 
%
%For batteries with a charging deadline of $D$, an additional constraint requires that all appliances be at state $x = E$ at time $t = D$, i.e.,
%\begin{eqnarray}
% n_x(D) = 0,~~~~ \forall  0 \leq x <E, 
%\end{eqnarray} 
%which completes the population model for ideal batteries with homogenous $E_i$'s.

One advantage of aggregating constraints for appliance populations into one model is that 
integrality constraints on $d_{x,x'}(t)$'s can be safely relaxed at large scales. This makes the population model linear and less computationally expensive.

Now, we go back to address non-homogeneity of the battery capacity $E_i$ across the population. Generally, the parameters that describe an individual appliance load can be divided into two groups: one set of parameters, denoted by $\boldsymbol{\kappa}_i$, describe the initial state of  control variables, e.g., the state of charge of a battery or the representative temperature of a TCL. Changing these quantities affects the load plasticity of an appliance only in a transient fashion. Appliances that only differ in terms of these initial parameters can be bundled together in a single population model for the load plasticity, like the batteries discussed above, for which $\boldsymbol{\kappa}_i = (t_i,S_i)$. Another set of parameters, denoted by $\boldsymbol{\theta}_i$, define the constraints that change the underlying structure of load plasticity. These parameters can include the physical constraints of the device, e.g., capacity of a battery or the wattage of a TCL, or the quality of service required by the user, e.g., the need to fully charge the battery by 8AM, or the need to keep the temperature inside a certain comfort band. Next, we address how we handle this change in nature of load plasticity by grouping loads in {\it clusters}.

\subsection{Load Clusters for Modeling Non-homogenous Populations}\label{cluster.sec}
%The key idea we apply is that of discretizing the model in time and {\it clustering requests} by quantizing any analog quantity that describes the load plasticity.  For participation in  DLS  programs, each appliance should submit an array of discrete and continuous parameters to the aggregator that fully describes its energy request, including properties of the appliance consumption (e.g. their initial state, the maximum charge rate and the capacity for a battery load, the desired temperature of a TCL, the set of Fourier coefficients for the load injection of a washer cycle \cite{oppenheim1983signals}) and their flexibility (e.g. deadlines or temperature comfort band) and possibly data about location that can be used to assess thermal constraints for the lines. 
%
%Let $\mathbf{c}$ be the characteristic vector that includes all of these parameters, which we assume are finite. While it may be reasonable to report these parameters to a local HEMS without sparing resolution, if all appliances were to report their true $\mathbf{c}$ to the aggregator, the cost of modeling and communicating them would be generally prohibitive. 

The  proposed  hybrid scalable load model  can be naturally generalized to handle a population of appliances with non-homogeneous $\boldsymbol{\theta}_i$  by quantizing $\boldsymbol{\theta}_i$. Thereby, we assume that appliances convey the parameters that capture their inherent constraints and the users' consumption preferences from a finite number of choices. To provide an example, this means that we can quantize the battery capacity to levels 1kWh apart. Thus, our model will not distinguish between two batteries with capacities 5.4 and 5.25 kWhs.

 We bundle requests with similar constraints in {\it clusters} indexed by $q=1, \ldots, Q$:
\begin{equation}
\boldsymbol{\theta}_i \xrightarrow[\mathcal{Q}]{\mbox{Quantize}} \boldsymbol{\theta}_q \xrightarrow[\mathcal{I}] {\mbox{Cluster index}} q.
\end{equation} 
The level of quantization error can be controlled by modifying $Q$ or $\boldsymbol{\theta}_q$'s, and is the knob that controls the complexity and accuracy of the aggregate model. This amounts to quantizing both the feature space and constraints of each appliance. 

For the ideal battery with no charging deadline, $\boldsymbol{\theta}_i = (E_i)$. This means that two batteries that share the same quantized capacity ($E^q$) and set of possible states ${\mathcal X}^q=\{0,1,\ldots,E^q\}$, but may vary in $\boldsymbol{\kappa}_i = (t_i,S_i)$,  belong to the same cluster $q$.  However, if two batteries have different capacities, they are not bundled together in the same cluster, since their load plasticities are different in nature. This is due to the fact that ${\mathcal X}^q$ differs from one cluster to the next.

We use a superscript $q$ to refer to any previously defined quantity for cluster $q$. Thus, generalizing \eqref{occur4}, the load plasticity of a non-homogeneous population of ideal batteries is,
  \begin{eqnarray} \label{occur5} \nonumber
{\mathcal L}(t) = \left\{L(t) |  L(t) = \sum_{q=1}^Q\sum_{x=0}^{E^q} \sum_{x'=0}^{E^q} (x'-x)\partial d^q_{x,x'}(t)\right.\nonumber 
& & \\ \left. \partial d_{x,x'}^q(t) \in \mathbb{Z}^+, \sum_{x' = 1}^{E^q}\partial d_{x,x'}^q(t)\leq n^q_x(t) \right\}& &  \nonumber
\end{eqnarray}
with
\begin{align}
n_{x}^q(t)&=  a^q_x(t) + \sum_{x' = 0}^{E^q} [d^q_{x',x}(t-1)- d^q_{x,x'}(t-1)]
\end{align}

This gives us a hybrid stochastic model for the load plasticity of a population of ideal batteries with no charge deadline. In Section \ref{pop.gen}, we use the clustering concept introduced here to provide population models for appliances with different load shapes and user preferences, e.g., charging deadlines.

\vspace{-0.4cm}

\subsection{Stochastic Decision Models}
The objective of this paper is not to propose a specific approach for demand management, e.g., \cite{mohsenian2010autonomous,kefayati2010efficient,kesidis,chen2012large,subramanian2012real,ramraja,6471273}. Rather, we offer scalable and  stochastic aggregate load models that can be used in conjunction with different DR schemes. The DR designer can use load plasticities with any preferred multistage stochastic optimization framework or be myopic. Obviously the computational complexity varies among different frameworks. Notice that the arrivals of appliances are naturally going to be non-stationary random processes. Thus, if the population is large, Monte Carlo sampling and Sample Average Approximation (SAA) on the random arrivals \cite{kleywegt2002sample} are  likely choices to reduce the stochastic optimization size.

Note that one would need to capture the statistics of the random arrival processes $a^q_x(t)$ in order to model stochastic load.  As an example of how one can gather and forecast these arrival statistics, the reader can see e.g. \cite{alizadeh2013ev}, which looks at forecasting and modeling Electric Vehicle (EV) charging requests arrivals using non-homogeneous Poisson processes.

\vspace{-0.1cm}

\section{Population Models for Realistic Loads}\label{pop.gen}
Real appliance loads are more constrained in terms of controllability than the ideal battery. Some loads are merely {\it deferrable} but uninterruptible. Many loads can be only turned ON and OFF, i.e., $L_i(t)\in\{0,G_i\}$. Some loads resemble the ideal battery, e.g., electric vehicles, but often include limitations on the rate of charge (change of state) and deadlines.

Here we build on the ideal battery results to expand our modeling effort to more realistic appliances. We bundle appliances in each category $v$ together in a single population model, and denote the load plasticity of each population by $\mathcal{L}^v(t)$. There are $Q^v$ clusters for each category $v$.

In general, an aggregator can serve $V$ different categories of loads. Given the population load plasticity of each category ${\mathcal L}^v(t)$, it is straightforward to write the load plasticity of the total demand served by an aggregator as the set:
\begin{align}\label{totplas}
{\mathcal L}(t)=\left\{L(t)|  L(t) = L^{I}(t) + \sum_{v=1}^V L^v(t), L^v(t) \in {\mathcal L}^v(t) \right\}
\end{align}
where $L^{I}(t)$ denotes the inflexible demand served by the aggregator at time $t$.

\subsection{Rate-Constrained Instantaneous Consumption (RIC)} \label{cat1}
The first category looks at a general battery with a maximum charging rate and a deadline to be charged by a certain amount. This could capture electric vehicle charging requests, including Vehicle to the Grid (V2G) applications. We consider a non-homogeneous population of batteries, each characterized by the vector $(t_i, X_i , E_i, \chi_i, \rho_i, G_i)$. The new parameters over those of the ideal battery are as follows: $\chi_i$   denotes the deadline for battery $i$ to receive at least $\rho_i$ percent of full charge, and $G_i$ denotes the maximum rate at which battery $i$ can be charged/discharged. The load plasticity of battery $i$ is:
  \begin{eqnarray}\nonumber
{\mathcal L}_i(t_i)&=&\{L_i(t) |  L_i(t) = \partial x_i(t), x_i(t_i) = S_i, \\
				& &x_i(t) \in \{0,1,\ldots, E_i\}, x_i(\chi_i) \geq \rho_i E_i, \nonumber \\
& & -G_i \leq \partial x_i(t) \leq G_i, t\geq t_i \}  \label{lossless-battery-discrete2}
\end{eqnarray}
By setting $\chi_i = \infty$, $\rho_i = 1$, and $G_i = E_i + 1$ the model is equivalent to the ideal battery.
Mapping this type of load plasticity into a population model is straightforward following the discussion in Sections \ref{idealbat.pop} and \ref{cluster.sec}. We cluster batteries using $\boldsymbol{\theta}_i = (E_i, \chi_i, \rho_i, G_i)$, and denote the parameters associated with cluster $q$ as $(E^q, \chi^q, \rho^q, G^q)$. Then, we need to modify the population model to handle the additional two constraints in \eqref{lossless-battery-discrete2} over that of the ideal battery in \eqref{lossless-battery-discrete}.

In each cluster $q$:
\begin{enumerate}
\item Batteries cannot move from state $x$ to any state $x'$ in one time step. Due to rate limitations, at each time step, a battery can move by a maximum of $G^q$ states, i.e. it can draw or deliver power at a rate $|\partial x(t)|\leq G^q$. Thus, the switch processes from and to state $x$ are only defined for $x' \in {\mathcal S}_x$ with 
\begin{equation}
{\mathcal S}_x =\{x - G^q, x - G^q+1, \ldots, x + G^q\};
\end{equation}
\item All appliances in cluster $q$ should be in a state $x \geq \rho^q E^q$ by time $\chi^q$. This translates into $n_x^q(\chi^q) = 0, \forall x < \rho^q E^q$.
\end{enumerate}

Consequently, the load plasticity of the population is:
  \begin{eqnarray} \label{occur5} 
{\mathcal L}^v(t) = \Big\{L(t) |  L(t) = \sum_{q=1}^{Q^v}\sum_{x=0}^{E^q} \sum_{x'\in \mathcal S_x}(x'-x)\partial d^q_{x,x'}(t),  & & \nonumber\\ \partial d_{x,x'}^q(t) \in \mathbb{Z}^+,\!\!\sum_{x'\in \mathcal S_x}\partial d_{x,x'}^q(t)\leq n^q_x(t),\nonumber & & \\  n_x^q(\chi^q) = 0, \forall x < \rho^q E^q\Big\},~~~~~~~~~~~~~~~& & 
\end{eqnarray}
with
\begin{align}\label{nx}
n_{x}^q(t)&=  a^q_x(t) + \sum_{x'\in \mathcal S_x}[d^q_{x',x}(t-1)- d^q_{x,x'}(t-1)].
\end{align}

\subsection{Interruptible service (IS)}  \label{cat2}
This category is more constrainted since it only allows batteries to charge at a constant rate $G_i$. The charge can however be interrupted multiple times.
 This category best models pool pumps or EVs that can only be charged at certain charging levels, e.g., 1.1 kW or 3.3 kW. Due to the similar nature of this category with the RIC, we simply present the population model. We cluster loads based on $\boldsymbol{\theta}_i = (E_i, \chi_i, \rho_i, G_i)$, and the population plasticity is:
 \begin{align} %
{\mathcal L}^v(t) \!=\! &\Big\{L(t) |  L(t) \!= \!\sum_{q=1}^{Q^v}\sum_{x=0}^{E^q} \!(x'\!-\!x)\partial d^q_{x,x'}(t)|_{x'=\min\{x+G^q,E^q\}}, \nonumber  \\ 
&\partial d_{x,x'}^q(t) \in \mathbb{Z}^+, \partial d_{x,\min\{x+G^q,E^q\}}^q(t)\leq n^q_x(t),\nonumber \\ 
&  n_x^q(\chi^q) = 0, \forall x < \rho^q E^q\Big\} \label{occur6} 
\end{align}

\subsection{Thermostatically controlled loads (TCLs)}\label{cat3}
 TCLs keep a representative temperature $x_i(t)$ within the boundaries of a comfort or safety band $[x^*_i-B_i/2,x_i^*+B_i/2]$ in the time window $[\chi^{s}_i,\chi_i^{e})$ of the day. We denote by $x^*_i$ the center and by $B_i$ the width of the comfort band. The information on $[\chi^{s}_i,\chi_i^{e})$ can be used for preconditioning (precooling and preheating). Thus, the time frame at which the thermostat operates the appliance can be larger than  $[\chi^{s}_i,\chi_i^{e})$. We assume that a TCL can be operated within a time frame $[t^{s}_i,t_i^{e})$. The availability of a TCL for control can be modeled through defining an arrival and departure process for TCL $i$:
\begin{align}
a_i(t) = u(t- t^{s}_i),~~r_i(t) = u(t- t^{e}_i).
\end{align}
TCLs are controlled with a cycling switch that turns them off/on, i.e. $L_i(t) \in \{0,G_i\}$.   Emulating the approach proposed in \cite{chong85}, we take a   quantized version of $x_i(t)$ to capture the state of TCL $i$ at time $t$. 
Using the model proposed in \cite{ihara1981physically}, we capture any random effects through a noise term denoted by $\alpha_i(t)$\footnote{The additive noise term in the original model proposed in \cite{ihara1981physically} is subsumed in the random process $\alpha_i(t)$.}. Thus, for unit $i$ we have:
\begin{align}
{\mathcal L}_i(t)=\Big\{ L_i(t)&|\partial x_i(t)=-k_i x_i(t) +\alpha_i(t)+ b_i(t) G_i,\label{tcls}\\
					& b_i(t)\in \{0,1\}, L_i(t)=b_i(t)G_i, \forall t \in [t_i^s,t_i^e) \nonumber\\
					&|x_i(t)-x^*_i|\leq B_i/2,~  \forall [t]_{24H} \in [\chi_i^s,\chi_i^e)  \Big\}\nonumber
\end{align}
where $b_i(t) = \{0,1\}$ denotes the off/on status of TCL $i$. The rate of heat gain is taken to be $G_i$ Btu/h.
The noise expected value is ${\mathbb E}[\alpha_i(t)]= x_{{\mathrm amb}}(t) k_i$ where $ x_{{\mathrm amb}}(t)$ is the ambient
temperature. We later assume that $x_{{\mathrm amb}}(t)$ varies slowly in time, and can be approximated with a constant in each hour. 

TCL load control, mainly through direct setpoint changes, has been widely studied, e.g., \cite{6149125}. Aggregate TCL models are fairly complex compared to the models introduced thus far in this work, because: 1) state transitions are {\it random} due to the interaction with the randomly changing ambient temperature and customer activities; 2)
the control action can only indirectly affect the number of appliances that migrate from one state to the other.   
In fact, let  $p_{\alpha_i}(\alpha;t)$ be the PMF of the quantized random process $\alpha_i(t)$ at time $t$. 
According to \eqref{tcls}, a transition from state $x$ to state $x'$ occurs at time $t$ if and only if
\begin{align}\alpha_i(t)&=x' - (1- k_i) x  - b_i(t) G_i\end{align}
The impact of switching the control $b_i(t)\in\{0,1\}$ is in changing the PMF that dictates how appliances move from one state $x$ to $x'$, denoted by $P_i(x'|x;t;b_i)$:
\begin{align}
P_i(x'|x;t;b_i(t))&=p_{\alpha_i}\left(x'-x(1-k_i)-b_i(t) G_i\right).
\end{align}
This models the probability that TCL $i$ goes from state $x$ to $x'$ at time $t$ as the sum of two mutually exclusive random events:
\begin{enumerate}
\item If the appliance is on, i.e., $b=1$, it can go from state $x$ to state $x'$ with success probability $P_i(x'|x;t;1)$;
\item If the appliance is off, i.e., $b=0$, it can go from state $x$ to state $x'$ with success probability $P_i(x'|x;t;0)$;
\end{enumerate}

To generalize this individual load plasticity to a population model, let us introduce:
\begin{definition}{\it Let
$D_{x,x'}(t)$ denote the random number of TCLs that switch from state $x$ to state $x'$ in the interval of time $t$. 
Let $n_{x,b}(t)$ indicate the number of TCLs in state $x$ that have a status of $b=\{0,1\}$ at time $t$.} 
\end{definition}

Note that $n_{x,b}(t)$ is exactly analogous to the model in \cite{mathieu} and to the priority stack in \cite{hao2013generalized}. What our model incorporates is the concept of clustering characteristics, which other authors commonly replaced with simplifying mean field assumptions. 
The discussion above implies that $D_{x,x'}(t)$ is the total number of TCLs that go from state $x$ to $x'$ under one of two Bernoulli random trials with success probability $P_i(x'|x;t;b_i(t))$. In general, the switching probabilities $P_i(x'|x;t;b_i(t))$ are different for heterogeneous TCLs.  
% Hence:
%\begin{align*}
%E\{D_{x,x'}(t)\}&=\sum_{i\in {\mathcal P}}P^{(L_i(t))}_i(x'|x;t)\\
%VAR(D_{x,x'}(t)) &=\sum_{i\in {\mathcal P}}P^{(L_i(t))}_i(x'|x;t)(1-P^{(L_i(t))}_i(x'|x;t))
%\end{align*}
Thus, following our previous quantization approach, we cluster the TCLs based on their basic characteristic tuple $\boldsymbol{\theta}_q = (G^q, x^{*q}, B^q,\chi^{s,q},\chi^{e,q})$,  but also based on the statistics $p_{\alpha^q}(\alpha;t)$. Consequently, we can define per-cluster switching PMFs $P^q(x'|x;t;b)$, which gives the probability of TCLs with switch position $b$ in cluster $q$ going from state $x$ to $x'$.  Consequently, $D_{x,x'}(t)$ is the sum of two random components:
\begin{enumerate}
\item The number of TCLs among $n_{x,1}(t)$ TCLs in the ON position that go from state $x$ to state $x'$ with probability $P^q(x'|x;t;1)$;
\item The number of TCLs among $n_{x,0}(t)$ TCLs in the OFF position that go from state $x$ to state $x'$ with probability $P^q(x'|x;t;0)$;
\end{enumerate}
Thus, $D_{x,x'}^q(t)$ at time $t$ is the sum of two Binomial random variables ${\mathcal B}(n_{x,b}^q(t),P^q(x'|x;t;b))$, with $b\in \{0,1\}$.  Hence,
\begin{align}
{\mathbb E}\{D_{x,x'}^q(t)|n^q_x(t)\}&=\sum_{b=0}^1n_{x,b}^q(t)P^q(x'|x;t;b)
\end{align}
Observe that the control action available here is picking the number of appliances that are turned on/off in each state $x$, i..e, $n_{x,0}^q(t)$ and $n_{x,1}^q(t)$, subject to:
\begin{equation}
n_{x,0}^q(t) + n_{x,1}^q(t) = n_{x}^q(t).
\end{equation}
If we denote the set of  all temperature bins in cluster $q$ by ${\mathcal S}^q$, the total number of appliances in state $x$ in cluster $q$, $n_{x}^q(t)$, is governed by the dynamics: 
\begin{align}
n^q_x(t)&\!=\!a_x^q(t)\!-\! r_x^q(t)+\!\!\!\!\sum_{x'\in {\mathcal S}^q} \!\!\!D_{x',x}^q(t-1)\!-\!D_{x,x'}^q(t-1).\label{nqxdyn}
\end{align}
where $a_x^q(t)$ and $r_x^q(t)$ are the arrival and departure processes for state $x$ of cluster $q$. They respectively count the number of TCLs that first become available and unavailable for control before or at time $t$, when they are at state $x$ of cluster $q$.

The comfort band constraint translates into
\begin{equation}
\forall |x-x^{*q}|>B^q/2~~\rightarrow~~{\rm Pr}(n_x(t)=0)\geq \eta, 
\end{equation}
where $ \eta$ is close to one, indicating that violations are rare if not outright impossible. 

Consequently, the stochastic aggregate plasticity of the population of heterogeneous TCLs is:
  \begin{eqnarray} \label{occur5} 
{\mathcal L}^v(t) = \Big\{L(t) |  L(t) = \sum_{q=1}^{Q^v} G^q \sum_{x\in {\mathcal S}^q}n_{x,1}^q(t), n_{x,1}^q(t) \in \mathbb{Z}^+&&
\\~n_{x,0}^q(t) + n_{x,1}^q(t) = n_{x}^q(t)~, 
n^q_x(t)=\mbox{see \eqref{nqxdyn}}&&\nonumber\\
{\mathbb E}\{D_{x,x'}^q(t)|n^q_x(t)\}=\sum_{b=0}^1n_{x,b}^q(t)P^q(x'|x;t;b); &&\nonumber\\
\forall x: |x-x^{*q}|>B^q/2,~~\forall [t]_{24H}\in [\chi^{s,q},\chi^{e,q})&&\nonumber\\~~~~~~~~~~~~~~~~~~~~~~~~~\rightarrow~{\rm Pr}(n^q_x(t)=0)\geq \eta \Big\}  \nonumber&&
\end{eqnarray}

\subsubsection*{Real-time Coarse Clustering} Contrary to all other population models discussed in the paper, \eqref{occur5}  is rather complex, mainly due to the fact that energy storage in a TCL is lossy. Note that \eqref{occur5} should naturally be paired with a model-predictive control strategy since its essence is in modeling the future random changes in temperature $x$  due to external noise and heat loss.
The noise term $\alpha^q(t)$ in \eqref{tcls} can be replaced with its expectation $x_{{\mathrm amb}}(t)a^q$ when using the model ex-ante, and this could 
considerably simplify planning decisions by lowering the number of clusters $Q^v$.  However, for real-time direct scheduling, using \eqref{occur5} can be cumbersome.

Thus, we propose a coarser real-time clustering method that is considerably less computational and communication intense compared to \eqref{occur5}.
Suppose that we limit ourselves to myopic policies, which is common practice in the field of routing and scheduling for large number of tasks. Note that the ultimate goal of a real-time direct scheduler would be to plan the switching events of each TCL under the constraint that the temperature does not go outside of the comfort band. Thus, one could envision that each TCL switching event can be scheduled by a deadline that can be predicted locally using \eqref{tcls}. 
In this case, a good myopic on-line scheduling technique can be derived by assigning an index that describes the imminence of each switching deadline communicated to the scheduler by the TCLs.  Then, the scheduler can cluster the population based on the quantized deadline index, rather than all of the parameters
in \eqref{occur5}.  

Thus, in the real-time coarse clustering approach, TCLs are clustered based on the current state of their switch (ON or OFF) and based on how imminent is their next switching event. Assume that $\alpha_i(t)$ is known to the $i$th customer and it varies slowly. Note that the two boundaries that force the appliance to switch state can be written as a function of the switch state $b_i(t)$ in equation \eqref{tcls}:
$$x_i^*-(-1)^{b_i(t)}\frac{B_i}{2}.$$
Then, solving the equation $x_i(\tau_{i}+t)=x_i^*-(-1)^{b_i(t)}\frac{B_i}{2}$ using the linear dynamics in \eqref{tcls}, the time at which appliance $i$ will reach
its upper or lower boundary is:
\begin{equation}
\tau_{i}(t)=\frac 1 {k_i}\ln\left(
\frac{x_i(t)-b_i(t)\frac{G_i}{k_i}-\frac{\alpha_i(t)}{k_i}}
{x_i^*-(-1)^{b_i(t)}\frac{B_i}{2}-b_i(t)\frac{G_i}{k_i}-\frac{\alpha_i(t)}{k_i}}
\right).
\end{equation}
Given that each TCL can locally calculate $\tau_{i}(t)$, we propose to use as the state of the TCL the pair
$(\tau_{i}(t),b_i(t))$ that captures the switching deadline for each TCL, irrespective of all other cluster parameters. Note that the power consumption $G$ is also ignored in this abstraction. Thus, the scheduler will have to use the average power consumption $\bar{G}$ of all TCLs in the population as an approximation of how much the load will be affected by each turn on/off event.
Just as before, $\tau_i$ is quantized and replaced with the closest  point $Q(\tau_i)$ 
in the set $\{0,\delta \tau,\ldots,(N-1)\delta \tau\}$.
Therefore, if we now use $\tau$ as a discrete index, the population model is:
\begin{equation}\label{coarsecluster}
n_{\tau,b}(t)=\sum_{i\in {\mathcal P}} \delta(\tau-Q(\tau_i(t)))\delta(b-b_i(t))(a_i(t) - r_i(t)).
\end{equation}
We will see an example of how these two models can be used in Sections \ref{sec.TCLex-ante} and \ref{sec.TCLonline} respectively.

\subsection{ Non-interruptible Deferrable Service (NID)}\label{cat4}
For this category of loads we assume that, once on, the appliance needs to complete a cycle and will automatically follow a preset load profile. However, the starting time can be shifted across hours within customer-specified limits. This best models appliances such as washer/dryers, and non-interruptible EV charging. The load is the output of a hybrid system captured by two hybrid states, one being the system at rest, and the other is the system evolution once ON, captured by a continuous consumption profile.

Each appliance $i$ in this category is characterized by $(t_i,\chi_i,\ell_i(t))$, where $t_i$ is the arrival time, $\chi_i$ is the maximum tolerable delay to start consumption, and $\ell_i(t)$ is a pulse that captures the load profile of appliance $i$ (if it is turned on at $t=0$).
Hence, the only control available is shifting the load to start at time $t_i \leq \tau_i \leq t_i+\chi_i$. Let $\ell_i(t)$ be the load signal if it starts at time 0. The load
plasticity is simply \begin{align}
{\mathcal L}_i(t)=\{L_i(t)| L_i(t)=\ell_i(t-\tau_i),t_i \leq \tau_i \leq t + \chi_i\}
\end{align}
The description above can be replaced with following integer linear model, based on the state $x_i(t)$ of the ON switch:
\begin{align}
{\mathcal L}_i(t)=&\{L_i(t)| L_i(t)=\ell_i(t)\star \partial x_i(t), x_i(t) \in \{0,1\},\label{eq.defer}\\
	& ~x_i(t)\geq a_i(t-\chi_i), ~x_i(t-1)\leq x_i(t)\leq a_i(t) \}.\nonumber
\end{align}
where $x_i(t) \in \{0,1\}$ denotes the state of the appliance, with 0 and 1 respectively corresponding to off and on status, and $\star$ denotes the convolution operation.  
Note that $x_i(t)$ can only be in the form of a step function, i.e., once $x_i(t) = 1$ at time $\tau_i$, if has to remain 1 for the remainder of time due to the constraint $x_i(t-1)\leq x_i(t)$.   
The convolution with $\partial x_i(t)=\delta(t-\tau_i)$ yields
$L_i(t)=\ell_i(t-\tau_i).$ The constraints ensure that an appliance can only be turned on after it arrives, i.e., when $a_i(t) = 1$, and that it cannot be turned off once on, i.e., appliances do not move from state $x=1$ to state $x=0$.

%Denoting by $\boldsymbol{\ell}_i=(\ell_i(0),\ell_i(1),\ldots)$ the vector of the samples of the load shape, the set ${\mathcal L}_i(t_i)$, 
%is fully specified by the triplet
%$(t_i,\boldsymbol{\ell}_i,\chi_i)$.
%This means that the distribution $p(\boldsymbol{\ell},\chi;t)$ and the arrival process $a_i(t)$
%form the {\bf ex-ante} model. 
%Instead {\bf on-line}, the
%${\mathcal L}_i(t)$ remains a set for as long as long as $x_i(t)=0$, but after $t=t_i+\tau_i$
%it is the fixed trajectory $\ell_i(t)\star \partial x_i(t)$. 

To build a population model, we cluster appliances based on quantized versions of the load profiles $\ell^{q}(t)$, plus the maximum tolerable delay $\chi^q$ they can tolerate. The total number of arrivals and activations of cluster $q$ up to time $t$ are denoted by $a^q(t)$ and $d^q(t)$. Similar to the individual model, the load of cluster $q$ is the convolution of the number of activations with the pulse $\ell^q(t)$.
The population plasticity is the sum of contributions from each cluster $q$:
\begin{align}\label{lint}
{\mathcal L}^v(t) \!=\! \Big\{&L(t) |  L(t) \!= \!\sum_{q=1}^{Q^v} \ell^q(t)\star \partial d^q(t), d^q(t)\in\mathbb{Z}^+ \\ 
& ~d^q(t)\geq a^q(t-\chi^q), ~d^q(t-1)\leq d^q(t)\leq a^q(t)
\Big\} \nonumber
\end{align}

%The pulse $\ell^q(t)$ can be viewed as the impulse response of the dynamical system corresponding to the ON state. 

\subsection{General Hybrid System (GHS)}\label{cat5}
The modeling for the NID can be extended to include
more hybrid states $x\in {\mathcal X}^q$ than just a single switch. We can extend the associated dynamics  of the system as it enters in a certain hybrid state to state specific load injection profiles $\ell_x^q(t)$. This is useful for modeling non-interruptible tasks that follow each other with a single shared deadline. A simple example for residential use is the washer and dryer cycle. 

In the general case, there will be a directed graph ${\mathcal G}^q=({\mathcal S}^q,\mathcal{E}^q)$ with discrete states $x\in {\mathcal S}^q$ and edges $\mathcal{E}^q$ describing the hybrid automaton and constraining the switching $\partial d^q_{x,x'}(t)$ to be between state pairs $(x,x')\in \mathcal{E}^q$, i.e., neighbors of state $x$ in ${\mathcal G}^q$.  Hybrid states can be characterized by switching conditions that depend on the evolution of the hybrid state dynamics hitting a boundary.

In this case, the load plasticity is in scheduling  transitions from hybrid state $x$ to hybrid state $x'$ before or after the boundary is hit. If the automaton can switch the discrete state every $\delta T$ and the injection rate is constant $G^q$, irrespective of the underlying job done in the state, the GHS model will end up resembling the RIC plasticity. We do not further elaborate for lack of space and worthwhile examples to explore.

\section{Using Cluster-based Load Plasticities for Planning and Control}\label{sec.why}

As mentioned previously, the primary contribution of this paper is in mathematically capturing the inherent flexibility of large appliance populations in a scalable fashion, which we addressed in Section \ref{pop.gen}.  However, the real value of such models is realized when they are coupled with DR functionalities. Thus, even though it is not the focus of this paper, to put our contribution into perspective, we discuss briefly how load plasticities can be used to design DR schemes.

As mentioned in the Introduction, hybrid stochastic load models can help planning and real-time control decisions of an aggregator. For example, in a  two-settlement structure:

{\bf Ex-ante}: the aggregator plans how much power $B(t)$ to purchase and how much ancillary service  capacity $M(t)$ to offer in the forward market. The aggregator picks $(B(t),M(t))$  to minimize an expected cost ${\mathcal C}^{F}(L(t),B(t),M(t))$, which includes costs and benefits of buying energy and selling ancillary services over a certain time horizon $\Omega$, i.e. 
\begin{equation}\label{ex-ante}
\min_{B(t),M(t)} \sum_{t\in \Omega}{\mathbb E}\{{\mathcal C}^{F}(L(t),B(t),M(t))\}~~\mbox{s.t.}~ L(t)\in { \mathcal L}^{DR}(t),
\end{equation}
where ${\mathcal L}^{DR}(t) \subseteq {\mathcal L}(t)$ denotes the set of possible load shapes that can be extracted from a population with load plasticity ${\mathcal L}(t)$ under a specific demand response strategy exercised by the aggregator. The expected value is over all future random realizations of load flexibility in ${ \mathcal L}^{DR}(t) $, given the random nature of the appliance arrivals, captured through $a^q_x(t)$ in our model. For brevity, we assume that $(B(t),M(t))$ will be cleared by the market operator at the marginal price.

{\bf Real-time}: the aggregator  is committed to control $L(t)$ to follow the schedule $(B(t),M(t))$ for the current time $t$ 
and minimize its real-time cost. The aggregator can be myopic:
\begin{equation}\label{myopic}
\min_{L(t)} ~{\mathcal C}^R(L(t),B(t),M(t))~~\mbox{s.t.}~ L(t)\in { \mathcal L}^{DR}(t),
\end{equation}
or be foresighted, solving a model-predictive problem similar to \eqref{ex-ante} with the real-time cost ${\mathcal C}^R$ as opposed to ${\mathcal C}^{F}$. At time $t$, the cost includes $H$ dummy future decisions for a time horizon $\Omega=\{t,t+1,\ldots,t+H\}$, constantly revised and updated in real-time. 
Note that such model-predictive strategies will be very hard to implement if the aggregator considers the continuous characteristics of every single load in the population. Thus, without clustering, the aggregator may need to resort to myopic policies for online control, the consequences of which are studied in the numerical example (Section \ref{sec.numerical}), wherein we solve \eqref{ex-ante} and \eqref{myopic}  to showcase the usefulness of our proposed paradigm.

Next we discuss the set ${ \mathcal L}^{DR}(t)$ for the two most prominently discussed types of continuous end-use demand management, i.e., direct load scheduling and dynamic retail pricing.

\subsection{Direct Load Scheduling (DLS)}\label{drtypes} For this type of demand management, the aggregator directly observes the arrivals of appliances in the grid, i.e.,   $\partial a^q_x(t)$ is known. In real-time, a control center makes scheduling decisions $\partial d^q_x(t)$ for the population present in the system, and relays these decisions to individual appliances. This makes DLS the tightest form of load management possible, as it can allow the aggregator to realize any possible load shape that the population can take. This amounts to
\begin{equation}{\mathcal L}^{DR}(t) = {\mathcal L}(t).\end{equation}

Of course, there is an economic side to DLS that we are not focusing on in this paper. The plasticity ${\mathcal L}(t)$ available to the aggregator depends on the choice of end-use customers to participate in the DLS program and allow the aggregator to schedule their consumption. This modifies the arrival rates of flexible loads, i.e., $a^q_x(t)$. For an aggressive and day-to-day type of direct load management, a rational customer will not provide this service for free. The design of appropriate economic incentives for direct load management is the subject of ongoing research, see, e.g., \cite{alizadeh, bitar2012deadline,kefayati}. In this paper, however, we do not address this problem and take  participation under a set of pre-specified customer preferences as given.

\subsection{Dynamic Pricing} In this type of demand management, time-varying retail prices constitute the only control knob available to the aggregator for both load control and billing. Here, ${\mathcal L}^{DR}(t)$ might not necessarily be equal to the inherent load plasticity of the population  ${\mathcal L}(t)$. Denoting the vector of retail prices posted at time $t$ for the look-ahead horizon $1,\ldots,\tau$ as $\mathbf{p} = [\pi^r(1),\ldots, \pi^r(\tau)]$, pricing can lead to the following set of possible load shapes:
\begin{align}
{\mathcal L}^{DR}(t) = \big\{ L(t) |& L(t) = f(t;\mathbf{p}(t)), \mathbf{p}(t) \in \mathcal{Z}(t) \big\}
\end{align}
where $\mathcal{Z}(t)$ is the set of all possible retail prices (probably partially regulated). The function $f(.)$ denotes the price-response of the population, which is equal to: 
\begin{align}\label{priceresp}
 f(t;\mathbf{p}(t)) = \mbox{argmin}_{L(t) \in \mathcal{L}(t)} \sum_{t = 1}^{\tau} \pi(t) L(t).
\end{align}

Notice the dependence of the price-response function on the  the inherent plasticity of the population  $\mathcal{L}(t)$ given by \eqref{totplas}. As opposed to the DLS case,  $\mathcal{L}(t)$  is not known to the aggregator here  since the arrivals $a^q_x(t)$ are not observable.
Consequently,

 1) In this case, the arrivals $a^q_x(t)$ need to be estimated to build the population model $\mathcal{L}(t)$ and the price-response. This would mean that the price-response can be treated as a parametric function with unknown variables $a^q_x(t)$ learned sequentially in time as new prices are posted every day. The aggregator can then use \eqref{priceresp} to estimate the price-response. 

 2) Another layer of complexity arises in the price-response learning problem due to the fact that demand flexibility may not be lossless. This means that customers may increase/decrease their total daily demand due to varying prices. Consequently, the arrival $a^q_x(t)$ can be a function of $\mathbf{p}(t)$. 

An alternative approach to sequential learning of $ f(t;\mathbf{p}(t))$ would be to have local third-party (non-interested) entities that analyze meter data through non-intrusive load monitoring techniques  \cite{nlm}. They can provide estimates of $a^q_x(t)$ to the aggregator. Non-intrusive load monitoring aims at disaggregating the consumption of different appliances in a household based on meter information.

Due to these extra layers of complexity, and lack of real-world data that captures the response of end-use customers to price,  in our numerical test cases we focus on DLS approaches.

\section{Scalable and Anonymous Telemetry}\label{sec.anoncom}
Fig. \ref{DLSchart} shows the steps that the aggregator should take to perform DLS. As the reader can see, the aggregator will need two-way communication with participating customers. A reader not interested in anonymous communication and submetering can simply skip this Section.

\begin{figure}[t]
\centering
\includegraphics[width = 0.8 \linewidth]{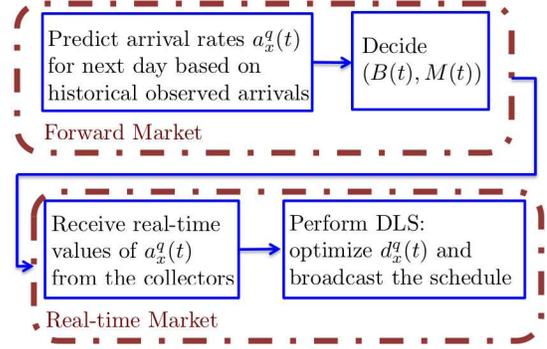} 
\caption{Control steps taken by aggregator performing DLS}
\label{DLSchart}
\end{figure}
%\floatplacement{figure}{H}

%Next, we indicate how our proposed scalable telemetry model for DLS or non-intrusive sub-metering works while preserving customer anonymity.

We assume that the aggregator has a portfolio of $V$ appliance categories. We have discussed four possible categories in Sections \ref{cat1} to \ref{cat4}, but more categories could be possible as described in Section \ref{cat5}. The aggregator  is able to control large populations 
${\mathcal P}^v (v=1,\ldots,V$) of each category using their respective population models ${\mathcal L}^v(t)$. Next, we clarify how the communications with this population of devices can occur anonymously under DLS.  For our purposes, retaining anonymity means that the aggregator can be blind to the customer identity and still directly schedule their appliances.   

We note that a Dynamic Pricing scheme can be supported by the current AMI setup. In the uplink, household meter information would be processed to produce estimates of $\partial a_x^{q,v}(t)$ \cite{nlm}, while the control is merely a price signal. Obfuscating meter information to protect privacy, thereby corrupting the estimates of $\partial a_x^{q,v}(t)$, will have direct consequences on the control reliability.

The components of the telemetry architecture for DLS are:
\begin{enumerate}
\item \textit{Uplink  traffic}: The value of the arrival process $a^{q,v}(t)$ for each state $x$, cluster $q$ and category of appliance $v$ can be communicated anonymously to the aggregator; 
\item  \textit{Downlink  traffic}: To activate appliances, information regarding the decided schedule, e.g., $d^{q,v}(t)$ should be relayed to each waiting appliance in cluster $q$ in an appliance-blind fashion;  
%\item {\it Privacy}: While the uplink and downlink communication are designed to work anonymously, does this mean that the end-users' consumption is completely private?
\item {\it Measurement and verification}: A slower channel is necessary to measure that the DLS control actions are correctly executed by the recruited appliances that received participation incentives. 
\end{enumerate}

\subsubsection{Downlink} First, we shall clarify how the {\it downlink} works for DLS, i.e., how appliances are made aware of scheduling decisions. We omit the category index $v$ for brevity here.  Since the aggregator's scheduling decisions do not involve any individual appliance identifiers, the downlink communication should also be designed accordingly to function in an appliance-blind fashion. Next, we discuss our proposed solution for each of the four appliance categories discussed in Sections \ref{cat1} to \ref{cat4}.

$\bullet${\it Categories \ref{cat1} and \ref{cat2}}: Here appliances in the same cluster $q$ all share the same deadline $\chi^q$.   Thus, if the aggregator decides to move $d_{x,x'}^q(t)$ from state $x$ to state $x'$, it can simply broadcast the following list of ratios to all appliances in state $x$:
\begin{equation}
{\kappa}_{x,x'}^q(t) = \frac{d_{x,x'}^q(t)}{n_x^q(t)},~~x'\in \mathcal S_x
\end{equation}
Upon receiving these values, if an appliance happens to be in cluster $q$ at state $x$, they have to move to state $x'$ with probability ${\kappa}_{x,x'}^q(t)$. Given a large enough population size, this randomized scheme will perform well in executing the scheduling decisions.

$\bullet${\it Category \ref{cat3}}: The decision that needs to be broadcast is  $n_{x,1}^q(t)$, i.e., how many TCLs in cluster $q$ at state $x$ can be on at time $t$. Given the current value of $n_x^q(t)$, this activation can  be executed using a simple randomized policy with
\begin{equation}
{\kappa}_{x}^q(t) = \frac{n_{x,1}^q(t)}{n_x^q(t)}
\end{equation}
$\bullet${\it Category \ref{cat4}}:
Here we assumed that  the population in cluster $q$ share the same slack time value $\chi^q$. However, since some appliances arrive earlier than others and should naturally be scheduled earlier, we can assume that arriving appliances are queued in a first-come-first-out (FIFO) discipline, i.e., if you arrived earlier in cluster $q$, you are scheduled earlier.  
But how can the aggregator address appliances with higher priorities in the queue when their identity is unknown? 
The idea is simple.  At time $t$, the aggregator needs to find the time epoch $\tau^q \leq t$ at which the number of appliances that arrived in cluster $q$ since the origin of time, i.e., $a^q(\tau^q)$,  was marginally higher than the number of appliances the aggregator plans to have activated since time epoch $1$ until time $t$, i.e., $d^q(t)$:
\begin{equation}\label{alpha2}
\tau^q(t)=\min\lbrace t' \leq t: a^q(t') \geq d^q(t)\rbrace.
\end{equation} 
If this time index $\tau^q(t)$ is broadcast to \textit{all appliances} in cluster $q$, then the appliances that  submitted requests before time $\tau^q(t)$   can start their task while the rest wait until authorized. If the difference between $a^q(t')$ and $d^q(t)$ is not negligible, a randomization ratio
\begin{equation}
{\kappa}^q(t) = \frac{ a^q(t') - d^q(t)}{\partial a^q(t')}
\end{equation}
can also be broadcast along with $\tau^q(t)$ to ensure that nearly $d^q(t)$ appliances are turned on.

\subsubsection{Uplink}
In the {\it uplink},  the aggregator just needs to gather the values of the arrival processes  $a^{q,v}_x(t)$\footnote{For TCLs, the switch process $D^{q,v}_{x,x'}(t)$ is communicated as well.} from the population. These values are aggregated across customers, but disaggregated across consumption cluster $q$, state $x$ and appliance categories $v$. In other words, 
while it is true that the appliances' specific use modality is visible, no explicit information on appliance owners is required for the aggregator's decisions. This makes any DR scheme using this model anonymous.

Each arrival event, i.e. every instance of $\partial a_i(t)=1$, is marked by the transmission of a packet, including the appliance type $v_i$, its appropriate cluster index $q_i$ (associated to $\boldsymbol{\theta}^q$), and potentially the state $x_i$. This results in a short packet, and can be easily integrated with simple application layer protocols used in industrial control and AMI networks. 
Because of this, the total communication traffic, in terms of number of average packets per $\delta T$ that enter the network, scales exactly as  $\sum_{x,q,v}\mathbb{E}\{\partial a^{q,v}_x(t)\}$. In each $\delta T$, the probability of transmission of a packet per appliance is  a Bernoulli random variable with success probability equal to $p_i(t)$, i.e.:
\begin{equation}\label{rate}
p_i(t)=\mathbb{E}\{\partial a_i(t)\}.
 \end{equation}

Note that the the discrete time bin $\delta T$ needs to be greater than the network end to end delay. This is the delay of getting a request forwarded from a HEMS upstream to the point where final delivery to the aggregator is made. This delay will vary widely depending on whether the AMI network or a conventional network provider serves this traffic. Current AMI mesh network solutions have significant latency but transmit a long record of raw samples of meter data. This is a much heavier task compared to forwarding the payload $(v_i,q_i,x_i)$ associated with the appliances' requests.   
For instance, IEEE 802.15.4 (Zigbee), considering the packet acknowledgment, has a latency per packet bounded below by about $\delta \tau= 7$msec. However, this is assuming that the channel is idle. Denoting the ratio $\delta \tau/\delta T$ by $\alpha$, in a certain neighborhood $\mathcal N$, the packet throughput per second is 
%$$\rho_{\mathcal N}(t)=\delta T^{-1}\sum_{i\in {\mathcal N}}\sum_{v=1}^Vp_i^v(t)\prod_{j\in {\mathcal N}/i}\prod_{u\neq v}(1-\alpha p_j^u(t)).$$  
\begin{equation}\label{ratemac}\rho_{\mathcal N}(t)=\delta T^{-1}\sum_{i\in {\mathcal N}}p_i(t)\prod_{j\in {\mathcal N}/i}(1-\alpha p_j(t)).\end{equation}

To ensure that the data presented to the aggregator can be anonymous, a trusted third party has to act as an intermediary, placing neighborhood \emph{collector} stations throughout the population.  These third-party collectors manage their own neighborhood AMIs and tally each received payload $(v_i,q_i,x_i)$ to compute a local $a_{x}^{q,v}(t)$ that is then sent to the aggregator.

While collectors hide the identities of participating customers, their updates on $a_{x}^{q,v}(t)$ reveal some information about their possibly small neighborhoods.  To mitigate this, we assume that the collectors have access to an out-of-band channel that can be used to deliver updates to the aggregator.  Using this channel, anonymous routing protocols (e.g., onion routing between collectors \cite{Dingledine2004TSO1251375.1251396}) can obfuscate the true sender of a message, allowing anonymous delivery of updates.

Note that the presence of neighborhood collectors is necessary to preserve customer anonymity when an otherwise significant portion of the AMI would be owned and operated by the aggregator.  As we will explain in Section \ref{sec.mv}, collectors also have the added benefit of providing verification mechanisms for DLS and align well with our goals.

 While with these check and balances, it is possible to ensure {\it anonymous} transactions, 
 the question that remains is whether anonymity is sufficient to ensure privacy. 
The problem that should be investigated is whether knowing aggregated $a_x^{q,v}(t)$ across the population can reveal information about individual users.  This can be viewed a special case  of the broader problem posed in {\it differential privacy} papers \cite{lalitha,diff2}. We leave the study of this topic to future work.

\subsubsection{Measurement \& Verification (M\&V) for DLS}\label{sec.mv}

Naturally, if DLS is performed, a verification mechanism must be in place to ensure that the control action takes place and economic benefits or penalties for the customer are appropriately accounted for.  Furthermore, the customer can have real-time reports signaling if something goes wrong with the flexible appliance control system. The aggregator can also receive appropriately filtered reports to monitor the status and bill its appliance fleet. 

During the day, the customer will have a specific household daily load profile $\boldsymbol{L}_i=(L_i(0),\ldots,L_i(N))$, where $N\delta T$ is equal to 
the $24$h period. This vector will have two components, 
\begin{equation}
\boldsymbol{L}_i=\boldsymbol{L}^I_i+\boldsymbol{L}^F_i 
\end{equation} 
one inflexible and random part $\boldsymbol{L}^I_i$, and another flexible portion $\boldsymbol{L}_i^F$ that has to follow the aggregator's dispatch in a DLS program. If the control took place correctly, $\boldsymbol{L}_i^F$ should be entirely predictable ex-post based on the control actions. More precisely, $\boldsymbol{L}_i^F$ needs to be equal to a dispatched value $\boldsymbol{L}_i^{DR}$, whose value is known to the collector.
The most precise approach to verify that $\boldsymbol{L}_i^F$ is equal to $\boldsymbol{L}_i^{DR}$ is to sub-meter the controllable appliances through a  trusted Application Programming Interface (API) in the HEMS. The API can then forward the submetered load to the collector.

Theoretically, submetering can be replaced by a strategy that allows the collector to search for the presence of the desired profile in the aggregate household consumption.  The M\&V detection problem can be then posed as the detection of the expected load component $\boldsymbol{L}_i^{DR}$, which may include multiple controllable appliances, immersed in the {\it noise} produced by the uncontrollable random load $\boldsymbol{L}^I_i$.
The M\&V module needs to track the statistics of $\boldsymbol{L}^I_i$, $p_{\boldsymbol{L}^I_i}({\bf f})$\footnote{The statistics are likely to be seasonal but we ignore that for the moment to avoid complicating the notation.}. Then, it can determine if $\boldsymbol{L}_i^{DR}$ is present using the likelihood function
\begin{equation}
\log p_{\boldsymbol{L}^I_i}(\boldsymbol{L}_i-\boldsymbol{L}_i^{DR} )\geq \eta
\end{equation}
For instance, if $\boldsymbol{L}^I_i$ is a normal vector with mean $\boldsymbol{\mu}_{\boldsymbol{L}^I_i}$ and covariance 
$\boldsymbol{W}_{\boldsymbol{L}^I_i}$, the test is equivalent to 
$$\|\boldsymbol{L}_i-\boldsymbol{L}_i^{DR}-\boldsymbol{\mu}_{\boldsymbol{L}^I_i}\|^2_{\boldsymbol{W}_{\boldsymbol{L}^I_i}^{-1}}\leq \eta'.$$
This means that the wider the variance of $\boldsymbol{L}^I_i$ is compared to $\|\boldsymbol{L}_i^{DR}\|^2$, the more the decision is plagued by uncertainty. This could make it possible for the customer to deceive the aggregator, even without tampering with the sensors. However, since flexible appliances are likely to be the most significant components of the household consumption pattern, it is highly likely that the error will be limited.   

\vspace{-0.2cm}

\section{Numerical Case Study}\label{sec.numerical}
%\subsection{Simulating aggregator demand}
To showcase the benefits of our scalable framework for modeling and scheduling heterogeneous loads, we simulate the operation of an aggregator dealing with large populations of EVs and TCLs. As discussed in Section \ref{cluster.sec}, the operations of the aggregator can be divided into two steps: ex-ante planning to determine the optimal energy market purchase and ancillary service capacity to provide, and real-time operations to manage the load. For load management, we choose to perform DLS on both TCLs and EVs. As discussed in Section \ref{drtypes}, direct scheduling eliminates the need for modeling the customers' price response, which is not straightforward given the lack of real-world data.

\subsection{Non-interruptible Plug-in Hybrid  Electric Vehicles - Setting}
We test our proposed model on a simulated population of 40,000 Plug-in Hybrid EVs (PHEV) that receive home charging. The operation of the aggregator is simulated for one full day, and the effects of shifting the load forward to the next day is captured by extending the duration of this study to 32 hours. Participating EVs in the DLS program arrive randomly, according to a non-homogeneous Poisson arrival process. The arrival rate, as well as the SoC and tolerable delay of each request, is simulated using real-world statistics in \cite{alizadeh2013ev}. We divide arriving vehicles between 15 different clusters,  representing different discrete SoCs between 0 to 4 hours and slack times between 1 to 3 hours (full charge corresponds to 5 hours of charging from a zero SoC).  Charge requests are taken to be non-interruptible once started. We add the simulated charging load of this population on top of a typical daily base load scaled to have PHEV demand as $10\%$ of peak load\cite{data}. 

To numerically showcase the benefits of our scalabale yet accurate load model, we pick the aggregator objective to be a simple one: minimize the expected cost of serving the load given perfect forecasts of market prices but uncertainty in load arrival statistics $a^q(t)$.
The aggregator's decision structure follows that of a two-settlement energy market (Section \ref{sec.why}), which we describe next.

\subsection{Non-interruptible PHEVs - Ex-ante Planning}\label{sec.PHEVex-ante}
The ex-ante cost captures the cost of buying energy in the forward market, plus the expected cost in the real-time market. Each day includes $h = 1,\ldots, H$ hours and $\ell = 1,\ldots,T$ subhourly hours in each hour. To make our results interpretable to the reader, we assume that the aggregator has access to perfect predictions of the forward market price, denoted by $\pi^{F}(h)$, and real-time upward and downward clearing prices, denoted by $\pi^{R}_{up}((h-1)T+\ell)$ and $\pi^{R}_{dn}((h-1)T+\ell)$ respectively. However, this is not a necessary element for our model to work.

The optimization that should be solved ex-ante is
\begin{align}
\min_{B(h)} \sum_{h=1}^H & {\mathbb E} ~\{ {\mathcal C}^{F}(L(t);B(h))\} = \label{dacost}\\ 
\min_{B(h)} \sum_{h=1}^H &\bigg\{ \pi^{F}(h) B(h) +  {\mathbb E}\{ {\mathcal C}^{R}(L(t);B(h)) \}  \bigg\}\nonumber\\
\mbox{s.t.}~~~~~& L(t)\in { \mathcal L}(t),\nonumber
\end{align}
where,
\begin{align}\label{rtcost}
&{\mathcal C}^{R}(L(t);B(h)) =\\ &\sum_{h=1}^{H} \sum_{\ell = 1}^T \big\{ \pi^{R}_{up}((h-1)T+\ell) (L((h-1)T+\ell) - B(h))^+ \nonumber\\ &~~~~~~~~~~+\pi^{R}_{dn}((h-1)T+\ell) (L((h-1)T+\ell) - B(h))^- \big\}. \nonumber\end{align}

\begin{figure}
{\includegraphics[width = \linewidth]{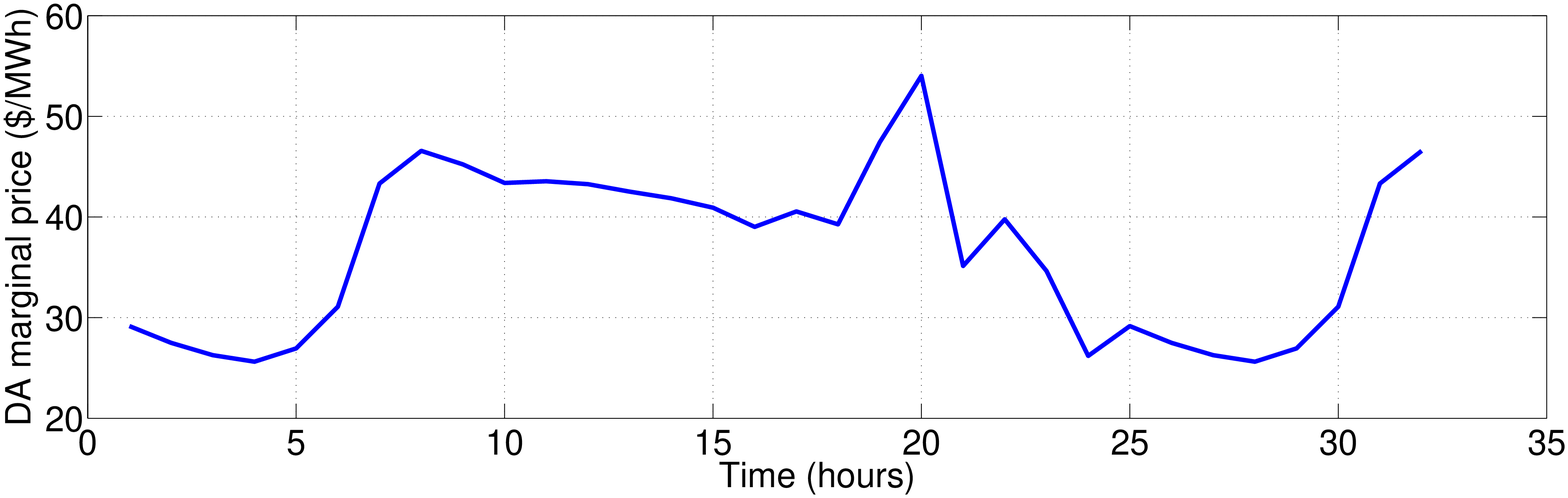}}
\caption{The day-ahead marginal price curve used in our case study }
\label{fig:price}
\end{figure}

Prices $\pi^{F}(h)$ reflect those of the PJM day-ahead energy market 10/22/2013				
 (see Fig. \ref{fig:price}). To only observe the effects that are of interest in this paper, we assume that the aggregator has access to perfect forecasts of  its uncontrollable load, and the only uncertainty is in the deferrable load. To avoid observing  effects of the aggregator trying to trade day-ahead prices against hour-ahead prices, we pick upward adjustment prices  to be 20$\%$ higher than day-ahead prices, and  downward adjustment prices to be 20$\%$ lower than day-ahead.

To solve the above mentioned ex-ante planning problem and determine the optimal $B(h)$, the aggregator needs a model of the load flexibility of the PHEV population on the next day.
To highlight the attraction of using a medium-grain model of non-interruptible load populations in \eqref{lint}, we compare the aggregator cost under our clustering method to that of a  tank model previously used in the literature as a population model \cite{homer, 6426102,ortega}. A tank model only takes the causality and deadline constraints for each appliance's energy consumption into account. Any other information about the specific shape of individual loads, i.e., uninterruptible charge at 1.1 kWs, is tossed out. Thus, the tank model will effectively model each non-interruptible PHEV as an ideal battery, and the population load plasticity will be captured by an extension of \eqref{occur4} to capture 32 different possible deadlines through 32 clusters:
  \begin{align} 
{\mathcal L}^v(h) = \Big\{L(h)|&  L(h) = \sum_{q=1}^{32} \sum_{x=0}^{5} \sum_{x'=0}^{5}(x'-x)\partial d^q_{x,x'}(h),\nonumber\\ &\partial d_{x,x'}^q(h) \in \mathbb{Z}^+,\!\! \sum_{x=0}^{5}\partial d_{x,x'}^q(h)\leq n^q_x(h),\nonumber \\&  n_x^q(q) = 0, \forall x < 5\Big\},\label{tank}
\end{align}
with
$n_{x}^q(h)=  a^q_x(h) + \sum_{x=0}^{5} [d^q_{x',x}(h-1)- d^q_{x,x'}(h-1)]
$.

Both population models \eqref{tank} and \eqref{lint} are inherently linear, allowing the cost-minimizing optimization \eqref{dacost} to be linear. Thus, given the stochastic nature of vehicle arrival patterns captured through $a^q_x(h)$, we can find $B(h)$ under both models by solving stochastic Integer Linear Programs (ILP). The good news is that, if needed, integrality constraints on the decision variables $d^q(h)$ and $d^q_{x',x}(h)$ can be relaxed at high aggregation levels without degrading performance.

The stochastic nature of vehicle arrivals in clusters is captured by a non-homogeneous Poisson process, which we show to be an appropriate choice based on real data in \cite{alizadeh2013ev}. Thus, we apply sample average approximation (SAA) to the random vehicle arrivals and compute the expected cost considering multiple scenarios for the vehicle arrival rates. Since the cost under any scenario is linear with $d^q(h)$, we solve a deterministic ILP. See the $B(h)$ determined under the tank model and our hybrid model in Figures  \ref{fig:tank} and \ref{fig:clustering} respectively. Notice that the $B(h)$ determined under the tank model is bouncier, due to the higher plasticity of ideal batteries. We will see why this is not a good choice for $B(h)$ next.

\subsection{Non-interruptible PHEVs - Online Scheduling}
We simulate the performance of a direct scheduler whose objective is to make real-time scheduling decisions ($d^q(t)$ in \eqref{lint}) to minimize \eqref{rtcost} in real-time, with different $B(h)$ given by the two models.  We approach the real-time scheduling problem in two ways:

\begin{figure}
\centering
\includegraphics[width = \linewidth]{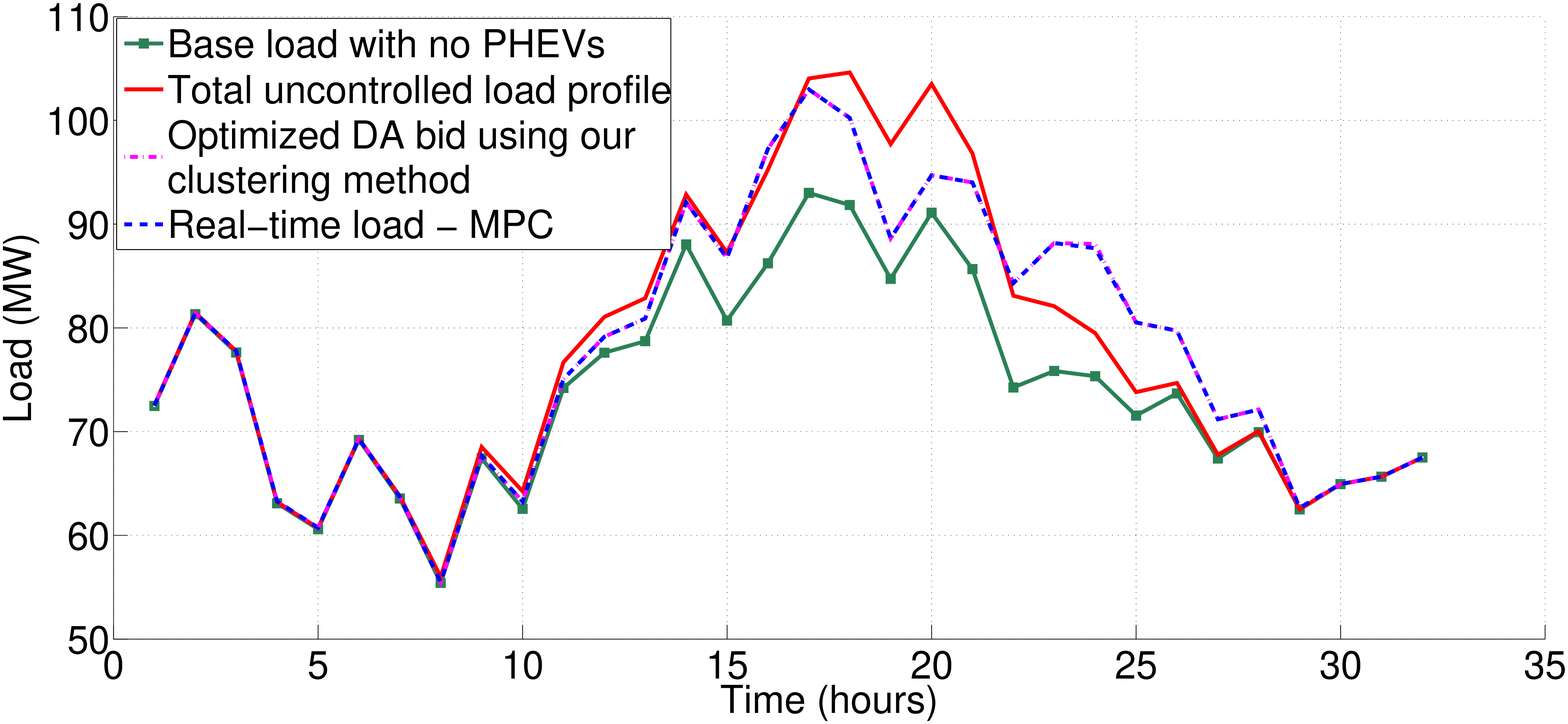} 
\caption{The aggregator can perfectly follow the day-ahead bid that was optimized under our proposed demand clustering method.}
\label{fig:clustering}
\end{figure}

\begin{figure}
\includegraphics[width = \linewidth]{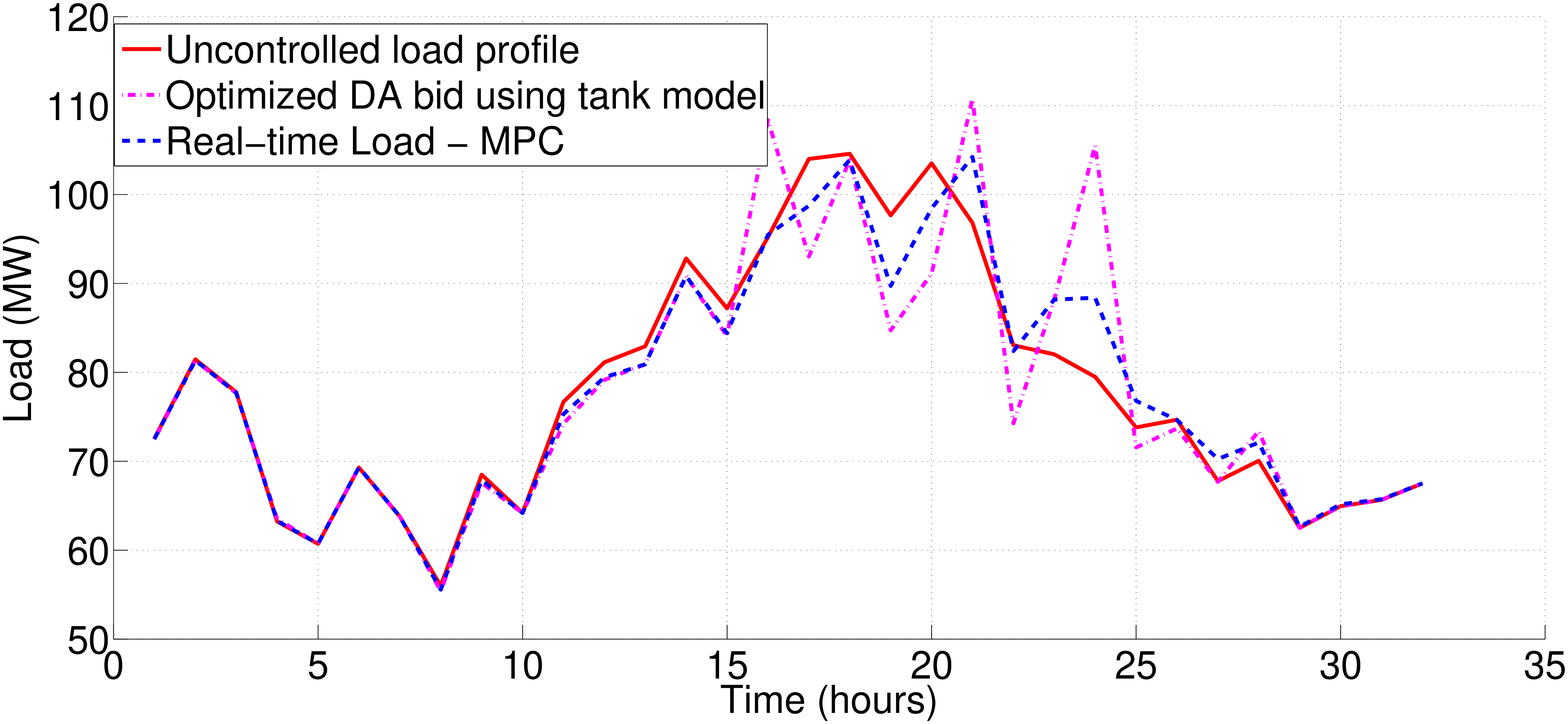} 
\caption{The aggregator cannot follow the day-ahead bid optimized using the tank model, simply due to modeling errors.}
\label{fig:tank}
\end{figure}

\subsubsection{Model-predictive Control (MPC) Based on Load Plasticity} Here we minimize the real-time cost in a model-predictive fashion, given the low complexity of solving such problems under our proposed clustered load model. 
Note that in real-time, the arrival counts of EVs in each cluster $q$, i.e., $a^q(h)$, are only causally observed. Thus, at each $h$, the scheduler has to solve a model predictive optimization to make scheduling decisions $d^q(t)$ while taking into account the possible scenarios for future arrivals. Here, we take a certainty equivalent approach at each $h$, replacing future arrival numbers of EVs for each cluster $q$  with their expected value $\mathbb{E}[a^q(h')] = \lambda^q(h'), h'>h$. See \cite{alizadeh2013ev} for a discussion of how $\lambda^q(h')$ can be predicted.

\subsubsection{Myopic Approach}Without load clustering, the problem of scheduling 40000 loads in real-time can be computationally prohibitive. Thus, here we simulate the performance of a myopic scheduler that, at each time $t$, simply picks the PHEVs that are closest to their deadline and schedules those vehicles for charge. The number of vehicles picked at each time is chosen such that the scheduled load is as close as possible to the forward purchase. We refer to this myopic scheme as Earliest Deadline First (EDF). Many authors have previously investigated EDF for EV scheduling, e.g., \cite{chen2012large,subramanian2012real}.

\begin{figure}
\centering
\includegraphics[width = \linewidth]{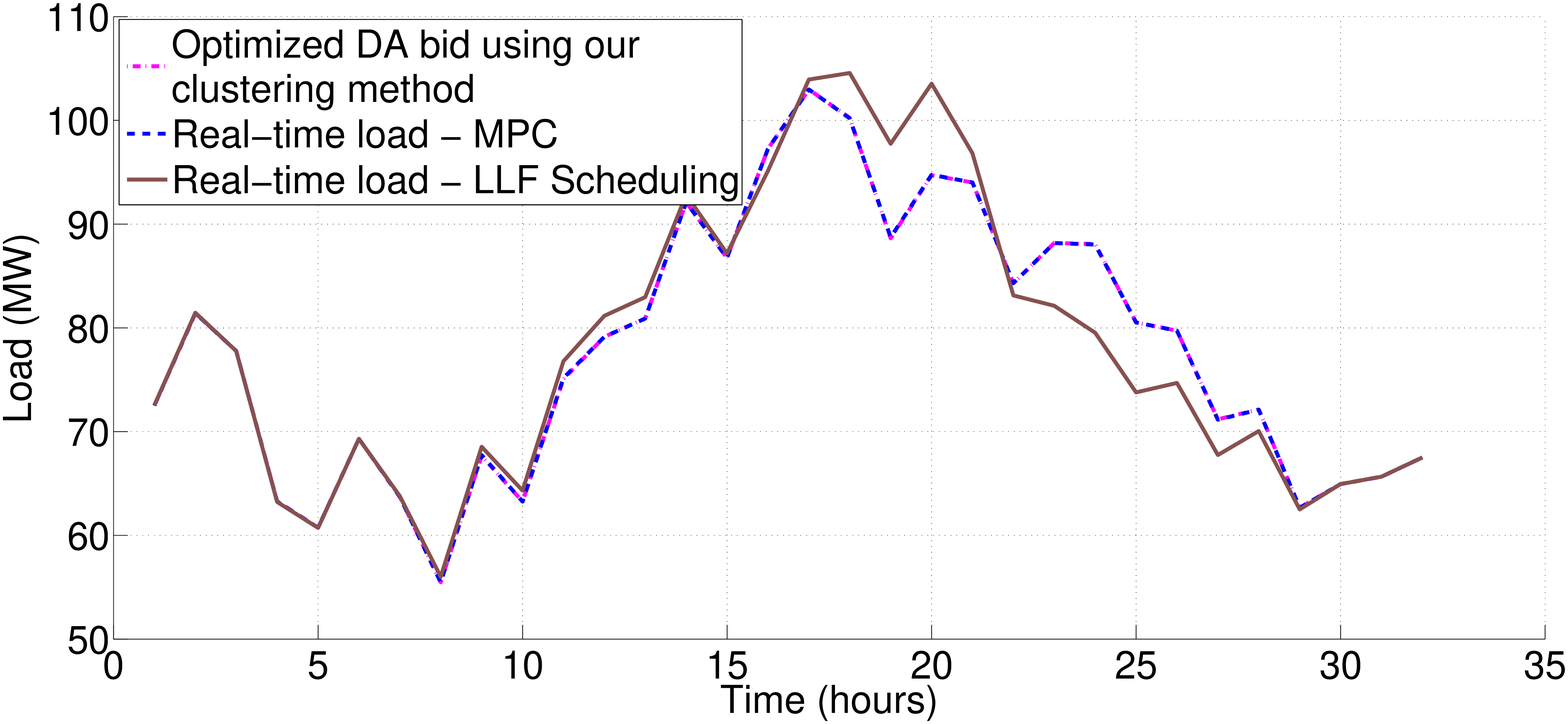} 
\caption{Performance comparison of LLF scheduler with the MPC scheduler}
\label{fig:llf}
\end{figure}

To compare the performance of population models \eqref{tank} and \eqref{lint}, we use the MPC method to schedule load in real-time, with the goal of minimizing \eqref{rtcost} under the forward purchase determined from these two models.
Figures  \ref{fig:clustering} and \ref{fig:tank} compare the actual hourly loads of the aggregator in these two scenarios. The reader can observe that the coarse nature of the tank model leads to large deviations of the real-time load from the bulk purchase. With the clustering method, on the other hand,  the deviation of the real-time load from the bulk purchase is nearly negligible. 
To give an intuitive explanation for the poor performance of the tank model, the reader should note that non-interruptible PHEVs have a more restricted load plasticity than the ideal battery.

To compare the effectiveness of MPC to that of LLF, we simulate the performance of both real-time scheduling schemes over the forward purchase determined through our model in Fig. \ref{fig:llf}. Notice the considerable deviations from the forward purchase for the LLF scheduler.
%
%\begin{figure}
%        \centering
%        \begin{subfigure}[b]{0.3\textwidth}
%                \centering
%                \includegraphics[width=\textwidth]{sim1.pdf}
%        \end{subfigure}
%        \begin{subfigure}[b]{0.3\textwidth}
%                \centering
%                \includegraphics[width=\textwidth]{m.pdf}        \end{subfigure}
%\caption[Optional caption for list of figures]{Caption of subfigures \subref{fig:subfig1}, \subref{fig:subfig2}}
%\end{figure}

\subsection{Thermostically Controlled Loads - Setting}
%-----------------------------------------------------------------------------------------------
%\subsection{Simulated Scenario}
Another useful property of scalable hybrid models is in determining the ancillary service capacity that a large population of flexible appliances can collectively provide. 
Here we simulate the performance of an aggregator that directly controls the on/off state of 10000 residential TCLs in heating mode for a duration of 6 hours. The ambient temperature for each hour $h$, $x_{{\mathrm amb}}(h)$, in the simulated interval mirrors that of Jan 29th, 2012 in Davis, California between 12 to 6 am. 

The comfort band, wattage $G_i$, initial temperature $x_i^{\rm init}$, loss rate $k_i$, and initial state $b_i$ of each device were generated randomly. The parameters $G_i$ and $k_i$ were randomly chosen to be independent and identically distributed (iid) from  uniform distributions $U([2000,4000])$ Btu/h and $U([50,200])$ W/C, and the heat gain was simply picked to be equal to $G_i$ Watts. The desired temperature of each device, $x^*_i$, came from an iid uniform distribution $U([69,75])^{\circ}F$, and $B_i$ from $U([2,4])^{\circ}F$. 

Remember that for population models, we needed to quantize the individual parameters of each device and assign them to clusters.  Here, $G_i, k_i, x^*_i$ and $B_i$ were quantized into 3, 4, 4, and 2 levels respectively. Thus, the TCLs were divided into 96 clusters to form the population model. For simulation purposes, the initial temperature was uniformly chosen to be in the comfort band, while initial on/off states were chosen from a Bernoulli distribution with success probability of $s = 0.15$. 

In what we refer to
as the autonomous switching scheme, a heating unit is turned
on when the temperature dips below the band and turned
off when it gets above the band. This describes the normal switching behavior of a TCL. The resulting load is what we take as the {\it baseline} load, which has the following expected value in the forward market:
\begin{eqnarray}
{\mathbb E}[L^{\rm base}(h)] = \sum_{q=1}^Q n^q(h) G^q\frac{ \tau_{\rm on}^{q}(h)}{\tau_{\rm on}^{q}(h) + \tau_{\rm off}^{q}(h)},\\
n^q(h)={\mathbb E}\left[\sum_{x\in {\mathcal S}} \sum_{\ell=1}^T n^q_x((h-1)T + \ell)\right],
\end{eqnarray}
and
 $\tau_{\rm on}^{q}$ and $\tau_{\rm off}^{q}$ respectively denote the expected on and off times of a TCL in cluster $q$ on the day-ahead. They are:
 \begin{align}
 \tau_{\rm on}^{q}(h)&=\frac{1}{k^q} \ln\left(
\frac{x^{*q}-\frac{B^q} 2 -\frac{G^q}{k^q}-x_{\rm amb}^q(h)}
{x^{*q}+\frac{B^q} 2-\frac{G^q}{k^q}-x_{\rm amb}^q(h)}
\right),\\
 \tau_{\rm off}^{q}(h)&=\frac{1}{k^q} \ln\left(
\frac{x^{*q}+\frac{B^q} 2 -x_{\rm amb}^q(h)}
{x^{*q}-\frac{B^q} 2 -x_{\rm amb}^q(h)}
\right).
 \end{align} 
 To simplify the compilation of these these average values, we take the noise term $\alpha_i(t)$ in \eqref{tcls} to be equal to its expected value, ${\mathbb E}[\alpha_i(t)]= x_{{\mathrm amb}}(h) a^q $ (c.f. \cite{globalSIP}).

\subsection{Thermostatically Controlled Loads - Ex-ante Planning}\label{sec.TCLex-ante}\label{sec.capacity}
Ex-ante, the objective of the aggregator is to find the maximum regulation capacity that it can safely provide using a population. In real-time, the status of the TCL devices has to be scheduled to follow the dispatch as closely as possible. Here we are focused on the ex-ante modeling. 

The AGC signal determines how much the TCL aggregate load is required to deviate from the baseline load. 
AGC signals  are claimed to resemble independent random variables with a zero mean. In practice, this is not the case and the signals are correlated in time. Hence, to be reliable, the aggregator must be able to increase/decrease demand from the baseline and hold the demand at that value for a certain duration.
The natural question that comes to mind is: how much can the demand be increased/decreased from the baseline? For TCLs in different clusters, the answer is different. 

Let us discuss the extreme case in which a TCL has to increase/decrease its load from the baseline in a stationary setting (i.e. for an indefinitely long duration). In this case, the change in consumption from base load is no longer considered as a transient behavior. This stationary load can be thought of as the load of a TCL that autonomously switches   in a comfort band $[\theta^{\max}_i,\theta^{\max}_i] \subset [x^*_i-B_i/2,x_i^*+B_i/2]$.
There is a limit to the highest and lowest amount of stationary consumption that can be achieved by a population of TCLs. If we choose to have a minimum width of $\Theta$ units for the comfort band, the scenario corresponding to the maximum amount of stationary power consumption for cluster $q$ is $\theta_{\max}^q = x^{*q}+B^q/2$ and $\theta_{\min}^q= x^{*q}+B^q/2 - \Theta$. This is because, in this case, the temperature respectively drops and increases at its quickest and slowest possible rate (see $\partial x_i(t) $  in \eqref{tcls}). This leads to the maximum possible amount of time that the TCL is on. Since TCLs,  when on, have constant power consumption $G^q$, the maximum stationary expected load of the population is:
\begin{eqnarray}
{\mathbb E}[L^{\rm max}(h)] = \sum_{q=1}^Q n^q(h) G^q\frac{ \tau_{\rm on}^{q,\max}(h)}{\tau_{\rm on}^{q,\max}(h) + \tau_{\rm off}^{q,\min}(h)},\end{eqnarray}
where
 $\tau_{\rm on}^{q,\max}$ and $\tau_{\rm off}^{q,\min}$ respectively denote the expected on and off times of a TCL in cluster $q$  operating in the comfort band $[x^{q*}+B^q/2 - \Theta, x^{*q}+B^q/2]$. They are:
 \begin{align}
 \tau_{\rm on}^{q,\max}(h)&=\frac{1}{k^q} \ln\left(
\frac{x^{*q}+\frac{B^q} 2 - \Theta-\frac{G^q}{k^q}-x_{\rm amb}^q(h)}
{x^{*q}+\frac{B^q} 2-\frac{G^q}{k^q}-x_{\rm amb}^q(h)}
\right),\\
 \tau_{\rm off}^{q,\min}(h)&=\frac{1}{k^q} \ln\left(
\frac{x^{*q}+\frac{B^q} 2 -x_{\rm amb}^q(h)}
{x^{*q}+\frac{B^q} 2 - \Theta-x_{\rm amb}^q(h)}
\right).
 \end{align} 
Similarly, % the lowest possible expected stationary load at hour $h$ is
\begin{eqnarray}
{\mathbb E}[L^{\rm min}(h)] = \sum_{q=1}^Q n^q(h) G^q\frac{ \tau_{\rm on}^{q,\min}(h)}{\tau_{\rm on}^{q,\min}(h) + \tau_{\rm off}^{q,\max}(h)},
\end{eqnarray}
where $\tau_{\rm on}^{q,\min}$ and $\tau_{\rm off}^{q,\max}$ respectively denote the average on and off times of a TCL in cluster $q$  operating in the comfort band $[x_i^*-B_i/2, x_i^*-B_i/2+\Theta]$. 

This leads to a very conservative estimate of the regulation capacity that can be safely provided by the TCL population:
\begin{eqnarray}
M \!\!=\!\! \min_h \min{\mathbb E}\{L^{\rm base}(h) \!-\! L^{\rm min}(h),  L^{\rm max}(h) \!-\!  L^{\rm base}(h)\}.
\end{eqnarray}

In reality, to provide regulation services, the TCLs do not have to hold their power consumption at $L^{\rm max}(h)$ and $L^{\rm min}(h)$ for a very long time. One way to get a less conservative and more profitable estimate of the capacity, $M'$, is to simulate the response of each cluster to a control signal that increases or decreases the load from its baseline by a certain step of variable height $m$. The population should be able to increase and hold the consumption at baseline plus $m$ for a certain amount of time $\upsilon$. Here we choose $\upsilon$ to be the 97 $\%$ quantile of the zero-crossing time of historical regulation signals. The maximum step size $m$ for which the response has an acceptable error is $M'$. When simulating the response, the initial temperature of all units is chosen uniformly at random within their comfort band. Here we take a response to be acceptable if the deviation from the target load is less than $0.05 M'$ for 95 percent of the duration $\upsilon$.

 The value of $\upsilon$ is  19 minutes based on 18 days of regulation signal  data available through PJM. By simulating the response of each cluster to different step sizes at different hours $h$, we determine the value of maximum safe deviation from base load for each $q$, which we dub as $M^q(h)$. Consequently,
\begin{eqnarray}
M' =  \sum_{q=1}^Q  \min_{h}n^q(h) M^q(h),
\end{eqnarray}
which is equal to 2.05 MWs for our simulated TCL population.

\subsection{Thermostatically Controlled Loads - Online Scheduling}\label{sec.TCLonline}

 In real-time, the status of the TCL devices has to be directly scheduled so that the load follows the regulation dispatch signal as closely as possible. We use the coarse clustering method in \eqref{coarsecluster} to perform this scheduling. Clusters are assigned to different deadlines one minute apart within the next 30 minutes, resulting in 60 separate clusters for both on and off TCLs. After each switching event, each TCL waits for a short time (courtesy period). Then, they submit their next switching deadline to the scheduler and wait. If anytime during this wait the temperature goes out of the comfort band, the TCL autonomously switches on/off. These are the events that can result in deviations from the dispatch, and are mainly due to forecast errors in the deadlines.

\begin{figure}[h!]
  \centering
      \includegraphics[width=\linewidth]{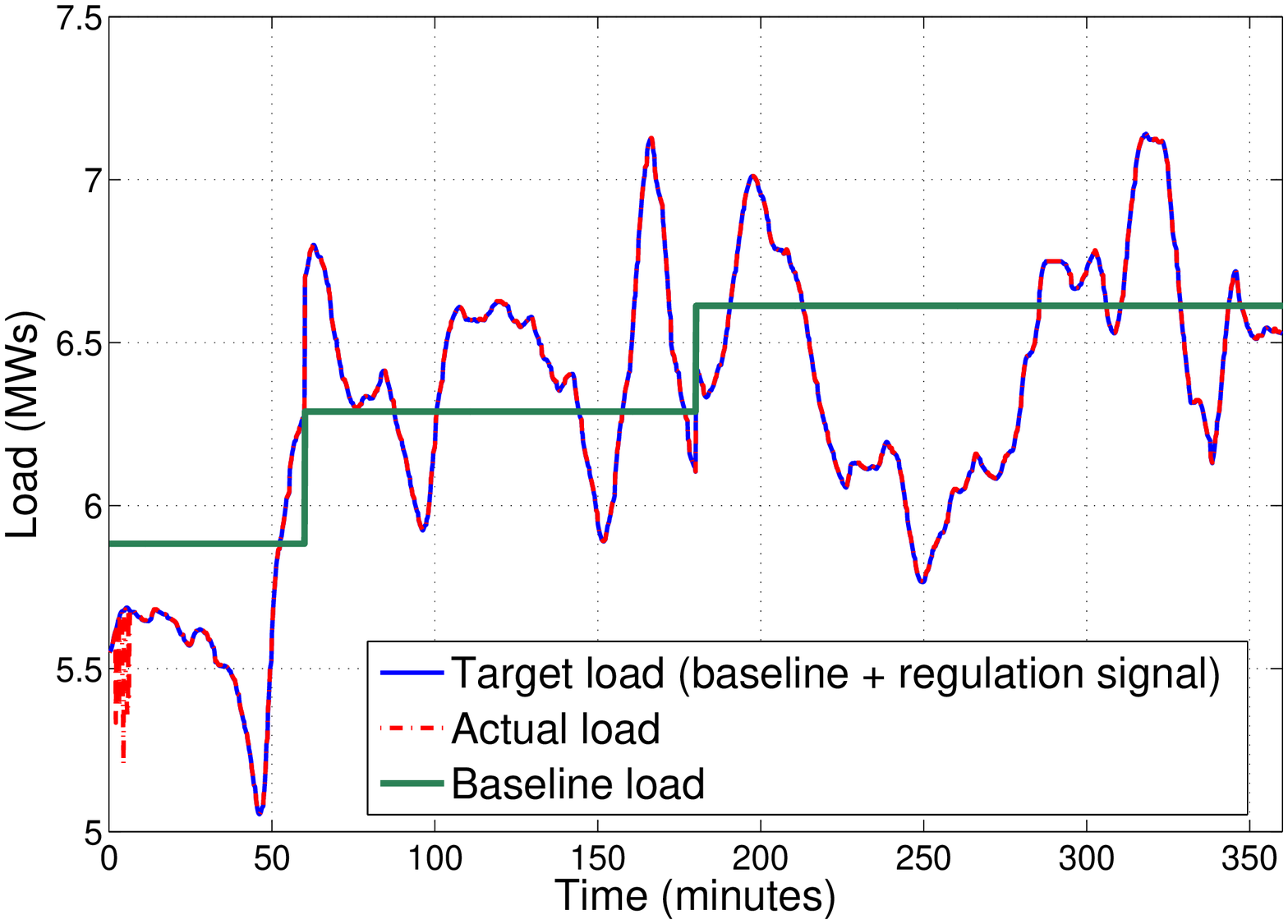}
\includegraphics[width=\linewidth, height=3.2cm]{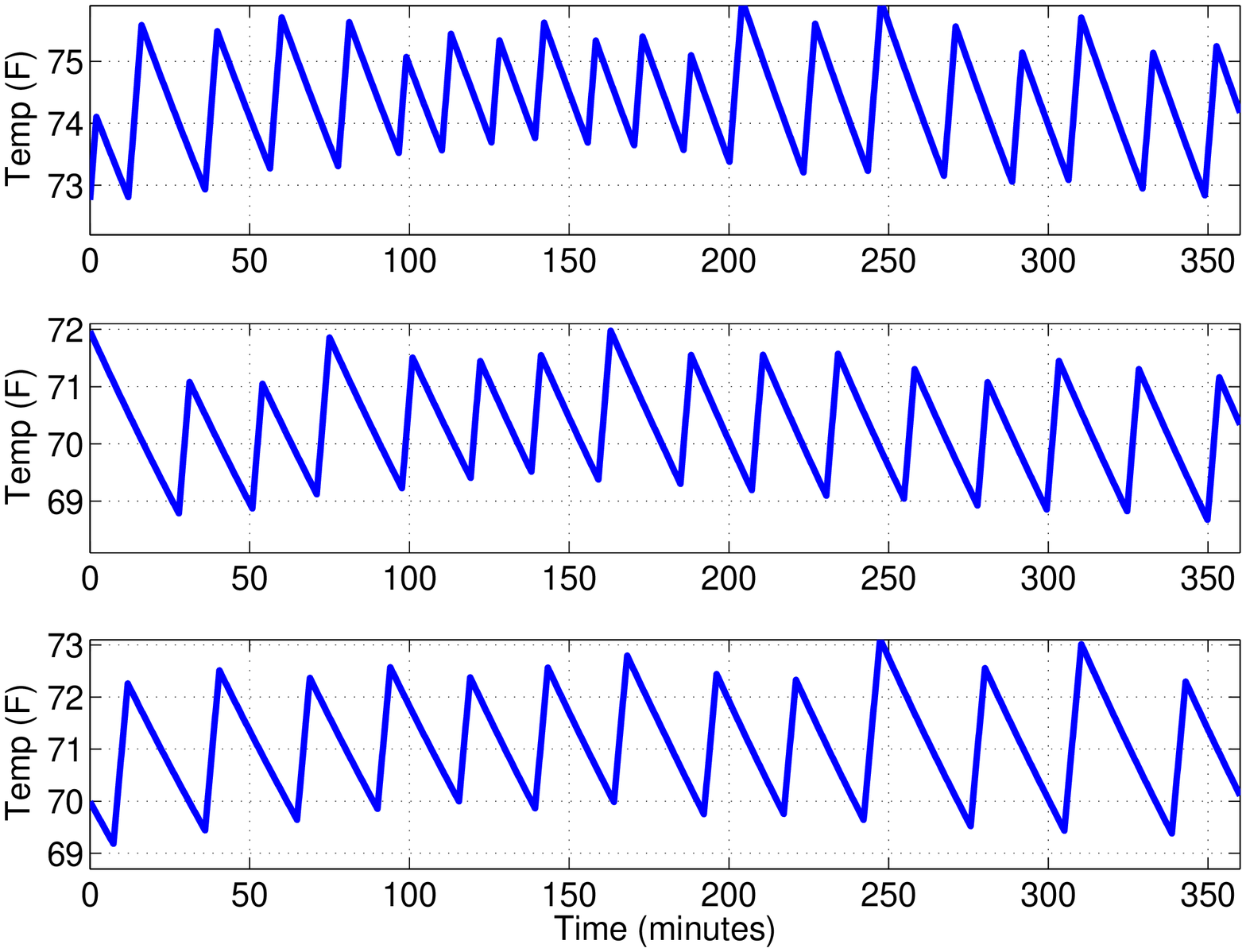}
  \caption{Top: Simulated response of the TCL population to regulation signals. Bottom: Representative temperature of three 2 ton A/C units. The y-axis range is the same as the unit comfort band.}
  \vspace{-.5cm}
\label{tclresponse10000}
\end{figure}

Normalized regulation signals available online through PJM were used, scaled to the regulation capacity offered by the aggregator.
Fig. \ref{tclresponse10000} shows the performance of our proposed algorithm for a duration of 6 hours. The aggregate consumption follows the dispatched deviation from the baseline consumption very closely. Individual representative temperatures of 3 random participating units is shown concurrently. We believe that the choice of $M'$ was near-optimal, since the population load response was not acceptable when the regulation capacity was increased to 2.4 MWs.

\subsection{Communication requirements}\label{sec.tp}
Here we estimate the communication requirements of implementing such DLS programs. We assume that every 1000 residential households are connected via a multiple-access channel (MAC) to  a collector node, who talks to the aggregator. Each house has 2 electric vehicles and 3 TCLs. Given the frequency of request arrivals for these appliance categories, the maximum rate requirement of the MAC channel in \eqref{ratemac} is 4.9 packets per second. This is a very low rate for a ZigBee collector, which can accommodate up to $1/0.007 \approx 142$ packets per second. What is most likely going to limit the number of houses  covered by one collector is the {\it range} at which the collector can  communicate, which is 100 meters for ZigBee. Inside this diameter, a ZigBee collector can cover 0.0142 packets per second per square meter, which can cover a 284 story building with 3 TCLs and 2 EVs in each residence. This limit will obviously not be reached in realistic scenarios. The collectors forward the $a^q(t)$'s to the aggregator every one minute through the internet. This includes 15 values for the PHEVs and 60 for the TCLs (equal to the number of clusters), amounting to less than a byte of information. The same negligible rate requirement holds for the downlink broadcast channel.

\section{Future Work}\label{sec.conclusion}
We presented a stochastic hybrid model for large populations of heterogeneous appliances, and showcased the benefits of having such a model when making market decisions and in scheduling. Future work will explore: 1) the optimal clustering method (number of clusters and bundling rule) given consumption statistics; 2) the impact of different clustering levels on market decisions and scheduling.

\appendices
\section{Proof of Lemma \ref{nxlemma}}\label{prooflemma}
 \begin{proof} Note that $x_i(t+1)=x_i(t)+\partial x_i(t)$ and $a_i(t+1)=a_i(t)+\partial a_i(t)$. Using \eqref{numberineachstate}, we can write:
\begin{align*}
\sum_{x=0}^{E}\sum_{x'=x}^{E} &\partial n_{x'}(t) = \\&~\sum_{x=0}^{E} \sum_{x'=x}^{E}\sum_{i\in {\cal P}} \delta(x_i(t+1)-x')a_i(t+1)\\
		      &-\sum_{x=0}^{E} \sum_{x'=x}^{E} \sum_{i\in {\cal P}}\delta(x_i(t)-x')a_i(t)\\
		      &=\sum_{i\in {\cal P}}\sum_{x=0}^{E}u(x_i(t)+\partial x_i(t)-x)(a_i(t)+\partial a_i(t))\\
		      &-\sum_{i\in {\cal P}}\sum_{x=0}^{E}u(x_i(t)-x)a_i(t)\\
		      &=\sum_{i\in {\cal P}}\sum_{x=0}^{E} [u(x_i(t)+\partial x_i(t)-x)-u(x_i(t)-x)]a_i(t)\\
		      &+\sum_{i\in {\cal P}}\sum_{x=0}^{E}u(x_i(t+1)  -x)\partial a_i(t)\\
		      &= \sum_{i\in {\cal P}}\partial x_i(t) a_i(t)+ \sum_{i\in {\cal P}} (x_i(t+1)+1) \partial a_i(t) \\
		      &=\sum_{i\in {\cal P}}L_i(t)+ \sum_{i\in {\cal P}} (S_i+1) \partial a_i(t)  \\
&=L(t)+ \sum_{x=0}^{E} (x+1) \partial a_x(t).
\end{align*}
The second to last equality holds following \eqref{pop} and the fact that $\partial a_i(t)$ is only non zero when $t_i = t+1$, i.e., $x_i(t+1) = S_i$.
\end{proof}

\bibliographystyle{IEEEtran}
\small
\bibliography{New2,new4,new5,newrefs}

\end{document}